\journal{arXiv, Nuclear Physics B format}
\newcommand{\diag}{\operatorname{diag}}
\newcommand{\GeV}{\text{ GeV}}
\newcommand{\MeV}{\text{ MeV}}
\newcommand{\gp}{g\,'}
\newcommand{\gpt}{g\,'^{\,2}}
 \newcommand{\pp}{p\,'}
\newcommand{\qp}{q\,'}
\newcommand{\rp}{r\,'}
\newcommand{\phip}{\phi\,'}
\newcommand{\thetap}{\theta\,'}
\newtheorem{definition}{Definition}
\newtheorem{theorem}{Theorem}
\newtheorem{corollary}{Corollary}
\newcommand{\req}[1]{Eq.\,(\ref{#1})}
\begin{document} 

\begin{frontmatter}

\title{Relic Neutrino Freeze-out: Dependence on Natural Constants}
\author{Jeremiah Birrell$^{a,b}$, Cheng Tao Yang$^{b,c}$,  and Johann Rafelski$^{b}$}
\address{ $^a$Program in Applied Mathematics, The University of Arizona, Tucson,  Arizona, 85721, USA\\[0.2cm]
 $^{b}$Department of Physics, The University of Arizona,  Tucson,  Arizona, 85721, USA\\[0.2cm]
 $^{c}$Department of Physics and Graduate Institute of Astrophysics,\\ National Taiwan University, Taipei, Taiwan, 10617}

\date{June 6, 2014}

\begin{abstract}
Analysis of  cosmic microwave background radiation fluctuations favors an effective number of neutrinos, $N_\nu>3$. This motivates a reinvestigation of the neutrino freeze-out process.  Here we characterize the dependence of  $N_\nu$ on the Standard Model (SM) parameters that govern neutrino freeze-out. We show that $N_\nu$ depends on a combination $\eta$  of several natural constants characterizing the relative strength of weak interaction processes in the early Universe and on the  Weinberg angle $\sin^2\theta_W$. We determine numerically the dependence  $N_\nu(\eta,\sin^2\theta_W)$ and discuss these results. The extensive numerical computations are made possible by two novel numerical procedures:  a spectral method Boltzmann equation solver adapted to allow for strong reheating and emergent chemical non-equilibrium, and a  method to evaluate Boltzmann equation collision integrals that generates a smooth integrand.
\end{abstract}

\begin{keyword}
Natural constants, effective number of neutrinos, relic neutrino background,  neutrino freeze-out, kinetic theory
\end{keyword}

\end{frontmatter}

\section{Introduction}\label{Intro}
The relic neutrino background is believed to be a  well preserved probe of a Universe only a second old. The properties of the neutrino background are influenced by the details of the freeze-out or decoupling process at a temperature $T={\cal O}(1\MeV)$, which is in turn controlled by the  standard model (SM) of particle physics  parameters.  In this paper, we study the influence of SM parameters on the neutrino distribution after freeze-out.  This exercise is of interest because:
\begin{itemize}
\item  
There is a tension between the known number of neutrinos (three flavors) and  the effective number of neutrinos $N_\nu\simeq 3.5$ deduced from the study of the cosmic microwave background (CMB).  Detailed analysis of the CMB by the Planck satellite collaboration (Planck)~\cite{Planck} tests both gravitational and SM interactions in the early Universe. So far very little effort has been devoted to the understanding how these results characterize SM properties in the early Universe.
\item
The topic of time variation of natural constants is a very active field with a long history~\cite{Uzan:2010pm}.  Consideration of neutrino freeze-out dependence on natural constants provides new insights on the time and/or temperature variation of several SM parameters considered at an early era,  $t\approx1$s,  in the Universe's evolution. 
\end{itemize}


The presence of  relativistic particles, such as neutrinos, strongly impacts the dynamics of the expansion of the Universe,   constrained by direct measurement and analysis of cosmic microwave background (CMB) temperature fluctuations~\cite{Planck}. The effect is described by  $N_\nu$ which quantifies the amount of radiation energy density, $\rho_r$, in the Universe prior to photon freeze-out and after $e^\pm$ annihilation and is defined by
\begin{equation}\label{eq:Nnu}
\rho_r=(1+(7/8)R_\nu^{4}N_\nu)\rho_\gamma,
\end{equation}
where $\rho_\gamma$ is the photon energy density. The factor 7/8 is the ratio of Fermi to Bose normalization in $\rho$ and the neutrino to photon temperature ratio $R_\nu$ is the result of the transfer of $e^+e^-$ entropy into photons after Standard Model (SM) left handed neutrino freeze-out:
\begin{equation}\label{T_nu_T_gamma}
R_\nu\equiv T_\nu/T_\gamma, \qquad R_\nu^0 =({4}/{11})^{1/3}.
\end{equation}
This well known value  $R_\nu^0$ arises in the limit where no entropy from the annihilating $e^\pm$ pairs is transferred to neutrinos, i.e. all entropy feeds and reheats the photon background.

We expect to measure a  value  $N_\nu=3$, i.e. the number of SM left handed neutrino flavors  only if:
\begin{enumerate}
\item
Photons and SM neutrinos are the only effectively massless particle species in the Universe between the freeze-out of the left handed neutrinos at  $T_\gamma={\cal{O}}(1)$ MeV and photon freeze-out at $T_\gamma=0.25$ eV, and 
\item 
No flow of entropy from $e^\pm$ annihilation to neutrinos occurs. The current status is that computation of the neutrino freeze-out process employing SM two body scattering interactions and carried out using the Boltzmann equation gives $N_\nu^{\rm th}=3.046$~\cite{Mangano2005}, a value close to the number of flavors.  
\end{enumerate}
 
The value of $N_\nu$ can be  measured by fitting to observational data, such as the distribution of CMB temperature fluctuations. The  Planck~\cite{Planck}  analysis gives $N_\nu=3.36\pm0.34$ (CMB only), $N_\nu=3.30\pm 0.27$ (CMB+BAO), and $N_\nu=3.62\pm 0.25$ (CMB+$H_0$) ($68\%$ confidence levels).  Combinations of of Planck results with other priors are also reported by the Planck collaboration with most resulting in central values $N_\nu\in (3.3,3.6)$. With more dedicated CMB experiments on the drawing board it is believed that a significantly more precise value of $N_\nu$  is forthcoming in the next decade.

The tension between the values inferred from observation and the SM prediction has inspired various theories, including the consideration of:  modified neutrino interactions~\cite{Mangano:2006ar};  a model in which the temperature of decoupling was a model parameter~\cite{Birrell:2013_2};  a  model of a new spontaneously broken symmetry associated with massless Goldstone bosons that freeze out prior to the disappearance of muons~\cite{Weinberg:2013kea}, and   a similar consideration motivated by recognition that  the  quark-gluon plasma phase transition potentially offers the required physics context~\cite{Birrell:2014connect}.

In this paper we explore the dependence of $N_\nu$ on the value of  natural constants within realm of known interactions. We show that  $N_\nu$ depends {\em only} on the magnitude of the  Weinberg angle in the form $\sin^2\theta_W$, and  a dimensionless relative interaction strength parameter $\eta$,
\begin{equation}\label{eq:etaMp}
\eta\equiv M_p m_e^3 G_F^2, \qquad M_p^2\equiv \frac{1}{8\pi G_N}, 
\end{equation}
a combination of  the electron mass $m_e$, Newton constant $G_N$, and the Fermi constant $G_F$.  The magnitude of  $\sin^2\theta_W$   is not fixed within the SM and  could be subject to variation as function of time or temperature. We find that   $\sin^2\theta_W$  which  differs substantially from the value measured today in vacuum is capable of significantly altering the value of $N_\nu$ that is generated during neutrino freeze-out. We further show   the combined effect of modified $\eta$ and $\sin^2\theta_W$ and argue that this can remove or at least reduce the tension of $N_\nu$ with the Planck data.

In section \ref{sec:boltz_param} we introduce the two body scattering description of neutrino freeze-out. In particular we discuss the Boltzmann equation, which is used to model the neutrino freeze-out process, and present the matrix elements that control neutrino freeze-out in subsection \ref{ssec:EB_Eq}. We then discuss in subsections \ref{ssec:matrix}  and \ref{ssec:int_st} the dependence of the Boltzmann equation on SM parameters, the Weinberg angle $\sin^2\theta_W$,  and the interaction strength parameter $\eta$ respectively.  In section \ref{sec:math} we introduce the technical methods we use to solve the Boltzmann equation. In subsection \ref{sec:boltz_solve} we outline our solution method for the Boltzmann equation. In subsection \ref{sec:coll_simp} we detail a new method for analytically simplifying the collision integrals in order to reduce the numerical integration costs. In subsection \ref{sec:comp} we compare with the results  by previous authors, highlighting the improvements we have made.

In section \ref{sec:param_dep} we show the impact of SM parameter values on neutrino freeze-out and the effective number of neutrinos.  In particular, we   show how the impact of the strength parameter $\eta$  and $\sin^2\theta_W$ on $N_\nu$.  We discuss the implications and connections of this work to other areas of physics, namely Big Bang nucleosynthesis and dark radiation, in section \ref{sec:disc}. We give our concluding analysis in section \ref{sec:concl}. \ref{app:delta} contains some mathematical background that is useful for the collision integral calculations of subsection \ref{sec:coll_simp} and  in \ref{app:nu_matrix_elements} we apply the method  to the processes involved in neutrino freeze-out. Finally, \ref{app:Tups} contains additional plots and numerical fits that show the impact of SM parameters on various quantities characterizing the neutrino distributions after freeze-out.

\section{Dynamical Description of Neutrino Freeze-out}\label{sec:boltz_param}
\subsection{Einstein-Boltzmann Equation}\label{ssec:EB_Eq}
To model the flow of energy and entropy into the relic neutrino distribution, and hence obtain the value of $N_\nu$ after freeze-out, we must solve the general relativistic Einstein-Boltzmann equation.  Several references  discuss this generalization of the Boltzmann equation in detail~\cite{andre,cercignani,bruhat,ehlers,kolb,bernstein2004kinetic} and  other works  specialize  this to the question of neutrino freeze-out~\cite{Madsen,Dolgov_Hansen,Gnedin,Esposito2000,Mangano2002,Mangano2005}.  Here we provide a quick overview of this literature. In the context of  general relativity the  Boltzmann equation is given by
\begin{equation}\label{boltzmann}
p^\alpha\partial_{x^\alpha}f-\Gamma^j_{\mu\nu}p^\mu p^\nu\partial_{p^j}f=C[f].
\end{equation}
$\Gamma^j_{\mu\nu}$ are the Christoffel symbols and so the left hand side expresses the fact that particles undergo geodesic motion in between point collisions.

The term $C[f]$ on the right hand side of the Boltzmann equation is called the collision operator and models the short range scattering processes that cause deviations from geodesic motion. For $2\leftrightarrow 2$ reactions between fermions, such as neutrinos and $e^\pm$, the collision operator takes the form
\begin{align}\label{coll}
C[f_1]=&\frac{1}{2}\int F(p_1,p_2,p_3,p_4) S |\mathcal{M}|^2(2\pi)^4\delta(\Delta p)\prod_{i=2}^4\delta_0(p_i^2-m_i^2)\frac{d^4p_i}{(2\pi)^3},\\
F=&f_3(p_3)f_4(p_4)f^1(p_1)f^2(p_2)-f_1(p_1)f_2(p_2)f^3(p_3)f^4(p_4),\notag\\
f^i=&1- f_i.\notag
\end{align}
Here $|\mathcal{M}|^2$ is the process amplitude or matrix element, $S$ is a numerical factor that incorporates symmetries and prevents over-counting, $f^i$ are the fermi blocking factors, $\delta(\Delta p)$ enforces four-momentum conservation in the reactions, and the $\delta_0(p_i^2-m_i^2)$ restrict the four momenta to the future timelike mass shells.

We now restrict our attention to systems of fermions under the assumption of homogeneity and isotropy. We assume that the particles are effectively massless,  i.e. the temperature is much greater than the mass scale.  Homogeneity and isotropy imply that the distribution function of each particle species under consideration has the form $f=f(t,p)$ where $p$ is the magnitude of the spacial component of the four momentum.  In a spatially flat FRW universe the Boltzmann equation reduces to
\begin{equation}\label{boltzmann_p}
\partial_t f-pH \partial_p f=\frac{1}{E}C[f],\hspace{2mm} H\equiv\frac{\dot{a}}{a}.
\end{equation}
This, combined with the Einstein equations and the matrix elements for the relevant processes, constitutes the dynamical equations governing neutrino freeze-out. 

We also obtain formulas for the rate of change in the  number density and energy density of the $i$th species
\begin{align}\label{n_div}
\frac{1}{a^3}\frac{d}{dt}(a^3n_i)=&\frac{g_p}{(2\pi)^3}\int C[f_i] \frac{d^3p}{E}.\\
\label{rho_div}
\frac{1}{a^4}\frac{d}{dt}(a^4\rho_i)=&\frac{g_p}{(2\pi)^3}\int C[f_i] d^3p .
\end{align} 
For free-streaming particles the vanishing of the collision operator implies conservation of `comoving' particle number of the $i$th species. From the associated powers of $a$ in \req{n_div} and \req{rho_div} we see that  the  energy of a free streaming particle  scales as $1/a$. This means that the distribution  of a free streaming massive particle species will, once the mass scale becomes relevant, evolve into non-thermal shape~\cite{Birrell:2013_2,nu_today}. 

The  matrix elements for weak force scattering processes involving neutrinos and $e^\pm$ are given in tables \ref{table:nu_e_reac} and \ref{table:nu_mu_reac}.  They were obtained from Ref.\cite{Dolgov_Hansen} and are valid in the limit $|p|\ll M_W,M_Z$, where in vacuum the gauge boson masses are $M_W=80.4\,{\rm GeV},\ M_Z=91.19\,{\rm GeV}$.

\begin{table}[h]
\centering 
\begin{tabular}{|c|c|}
\hline
Process &$S|\mathcal{M}|^2$  \\
\hline
$\nu_e+\bar\nu_e\rightarrow\nu_e+\bar\nu_e$ & $128G_F^2(p_1\cdot p_4)(p_2\cdot p_3)$\\
\hline
$\nu_e+\nu_e\rightarrow\nu_e+\nu_e$ & $64G_F^2(p_1\cdot p_2)(p_3\cdot p_4)$\\
\hline
$\nu_e+\bar\nu_e\rightarrow\nu_j+\bar\nu_j$&$32G_F^2(p_1\cdot p_4)(p_2\cdot p_3)$\\
\hline
$\nu_e+\bar\nu_j\rightarrow\nu_e+\bar\nu_j$ & $32G_F^2(p_1\cdot p_4)(p_2\cdot p_3)$\\
\hline
$\nu_e+\nu_j\rightarrow\nu_e+\nu_j$&$32G_F^2(p_1\cdot p_2)(p_3\cdot p_4)$\\
\hline
$\nu_e+\bar\nu_e\rightarrow e^++e^-$ & $128G_F^2[g_L^2(p_1\cdot p_4)(p_2\cdot p_3)+g_R^2(p_1\cdot p_3)(p_2\cdot p_4)+g_Lg_Rm_e^2(p_1\cdot p_2)]$\\
\hline
$\nu_e+e^-\rightarrow\nu_e+e^-$ & $128G_F^2[g_L^2(p_1\cdot p_2)(p_3\cdot p_4)+g_R^2(p_1\cdot p_4)(p_2\cdot p_3)-g_Lg_Rm_e^2(p_1\cdot p_3)]$\\
\hline
$\nu_e+e^+\rightarrow\nu_e+e^+$ & $128G_F^2[g_R^2(p_1\cdot p_2)(p_3\cdot p_4)+g_L^2(p_1\cdot p_4)(p_2\cdot p_3)-g_Lg_Rm_e^2(p_1\cdot p_3)]$\\
\hline
\end{tabular}
\caption{Matrix elements~\cite{Dolgov_Hansen} for electron neutrino processes where particle  $j=\mu,\tau$,  $g_L=\frac{1}{2}+\sin^2\theta_W$, $g_R=\sin^2\theta_W$.}
\label{table:nu_e_reac}
\end{table}

\begin{table}[h]
\centering 
\begin{tabular}{|c|c|}
\hline
Process &$S|\mathcal{M}|^2$  \\
\hline
$\nu_i+\bar\nu_i\rightarrow\nu_i+\bar\nu_i$ & $128G_F^2(p_1\cdot p_4)(p_2\cdot p_3)$\\
\hline
$\nu_i+\nu_i\rightarrow\nu_i+\nu_i$ & $64G_F^2(p_1\cdot p_2)(p_3\cdot p_4)$\\
\hline
$\nu_i+\bar\nu_i\rightarrow\nu_j+\bar\nu_j$&$32G_F^2(p_1\cdot p_4)(p_2\cdot p_3)$\\
\hline
$\nu_i+\bar\nu_j\rightarrow\nu_i+\bar\nu_j$ & $32G_F^2(p_1\cdot p_4)(p_2\cdot p_3)$\\
\hline
$\nu_i+\nu_j\rightarrow\nu_i+\nu_j$&$32G_F^2(p_1\cdot p_2)(p_3\cdot p_4)$\\
\hline
$\nu_i+\bar\nu_i\rightarrow e^++e^-$ & $128G_F^2[\tilde{g}_L^2(p_1\cdot p_4)(p_2\cdot p_3)+g_R^2(p_1\cdot p_3)(p_2\cdot p_4)+\tilde{g}_Lg_Rm_e^2(p_1\cdot p_2)]$\\
\hline
$\nu_i+e^-\rightarrow\nu_i+e^-$ & $128G_F^2[\tilde{g}_L^2(p_1\cdot p_2)(p_3\cdot p_4)+g_R^2(p_1\cdot p_4)(p_2\cdot p_3)-\tilde{g}_Lg_Rm_e^2(p_1\cdot p_3)]$\\
\hline
$\nu_i+e^+\rightarrow\nu_i+e^+$ & $128G_F^2[g_R^2(p_1\cdot p_2)(p_3\cdot p_4)+\tilde{g}_L^2(p_1\cdot p_4)(p_2\cdot p_3)-\tilde{g}_Lg_Rm_e^2(p_1\cdot p_3)]$\\
\hline
\end{tabular}
\caption{Matrix elements~\cite{Dolgov_Hansen} for $\mu$ and $\tau$ neutrino processes where $i=\mu,\tau$, $j=e,\mu,\tau$, $j\neq i$,  $\tilde{g}_L=g_L-1=-\frac{1}{2}+\sin^2\theta_W$, $g_R=\sin^2\theta_W$.}
\label{table:nu_mu_reac}
\end{table}

\subsection{Weinberg Angle}\label{ssec:matrix}
The $SU(2)\times U(1)$ gauge coupling constants $g$, $\gp$ are constrained by the two physical parameters, the Weinberg angle $\theta_W$ and the electric charge $e$
\begin{equation}\label{eq:gaugeconstr}
\sin \theta_W =\frac{\gp}{\sqrt{g^2+\gpt}}, \qquad e=\frac{g\gp}{\sqrt{g^2+\gpt}}.
\end{equation}
An alternative way to write these constraints is 
\begin{equation}\label{eq:gaugeconstr2}
g\sin\theta_W =e,\quad \gp\cos\theta_W =e,\qquad \frac 1{e^2}=\frac 1{g^2}+\frac 1{\gpt}.
\end{equation}
$\theta_W$  enters the  matrix elements presented in tables \ref{table:nu_e_reac} and \ref{table:nu_mu_reac} by way of 
\begin{equation}
g_L=\frac{1}{2}+\sin^2\theta_W,\quad g_R=\sin^2\theta_W,\qquad 
\tilde{g}_L=g_L-1=-\frac{1}{2}+\sin^2\theta_W.
\end{equation}

The Fermi constant $G_F$ fixes the vacuum expectation value of the Higgs field 
\begin{equation}
v\equiv 2^{-1/4}G_F^{-1/2}=246.22\,{\rm GeV}. 
\end{equation}
The mass of the $W$ and $Z$ gauge bosons can be written in terms of  $v$ 
\begin{equation}\label{eq:massWZ}
M_W=\frac{v}{2}g =\frac{ve}{2\sin \theta_W}, \qquad
M_Z=\frac{v}{2}\sqrt{g^2+\gpt} =\frac{ve}{2\sin \theta_W\cos \theta_W}.
\end{equation}
We show the dependence of $M_W$ and $M_Z$ on Weinberg angle in figure \ref{fig:W_Z_masses}, using $ve/2=38.4\GeV$ at the Z-scale.

\begin{figure}
\centerline{\includegraphics[width=0.75\columnwidth]{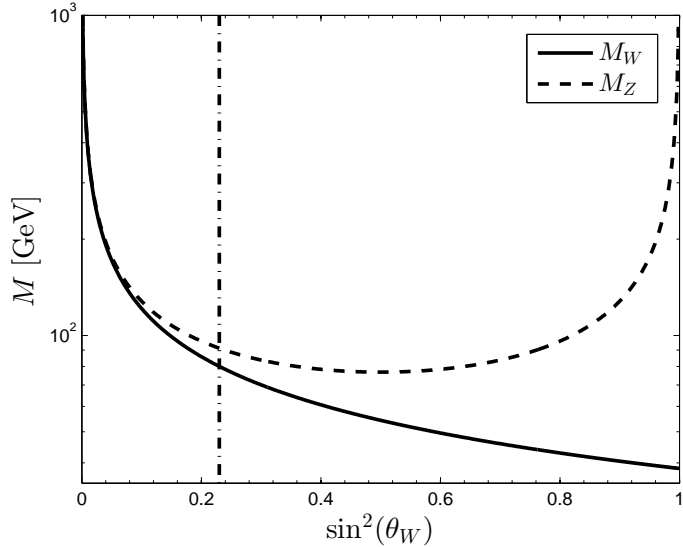}}
\caption{Dependence of $W$ and $Z$ boson masses on Weinberg angle, using $ve/2=38.4\GeV$ at the Z-scale. The vertical line refers to the SM vacuum case.}\label{fig:W_Z_masses}
 \end{figure}

Fixing $v$ and  $e$, $M_W$ is minimized when $\sin \theta_W =1$. Thus we find a factor of $1/2$ as a maximal possible reduction in $M_W$, also seen in figure~\ref{fig:W_Z_masses}.  This implies that $M_Z>M_W\gg |p|$ for neutrino momentum $p$ in the energy range of neutrino freeze-out, around $1\MeV$, even as we vary $\sin \theta_W $.  Even if $v$ is allowed to vary, for this approximation to cease  to be valid it would have to be reduced by a factor of $10^5$, in our view an extreme amount. Therefore we can carry out the computation of neutrino decoupling within the effective Fermi theory of weak interactions.

The ratio  ${M_W}/{M_Z}=\cos \theta_W =0.8815$ implies  $\sin^2\theta_W=0.223$. Considering that there is a rapid change  with scale the actual values are  $\left.\sin^2\theta_W\right|_{M_Z}=0.23116\pm0.00012$ and  $\left.\sin^2\theta_W\right|_{M_W}=0.22296\pm0.00028$. We will present our results as a function of $\sin^2\theta_W$ which we consider to be an unknown parameter in the hot Universe aged about one second. The other SM parameter of the electro-weak theory is the electric charge $e$. A variation in  $e$ is also possible, for example, due to  time evolution of the grand unified scale~\cite{Calmet:2002jz}. 

The symmetry breaking parameter $\sin^2\theta_W$ is at present a measured but theoretically unconstrained  SM parameter. However, should a grand unified approach in which the strong interactions are merged into the electroweak interactions be discovered, then presumably $\sin^2\theta_W$ could become fixed by the particular group structure. Such models are strongly constrained by  proton decay limits~\cite{Babu:2013jba}, hence a fundamental constraint on  $\sin^2\theta_W$  is not (yet) in sight.

\subsection{Interaction Strength Parameter $\eta$}\label{ssec:int_st}
In order to isolate the combination of natural constants which controls the neutrino freeze-out process, we cast the Einstein-Boltzmann model of neutrino freeze-out into dimensionless form. In the first step we look at the expansion of the Universe i.e. the Hubble parameter $H$. The Einstein equations contain the Hubble equation
\begin{equation}\label{hubble_eq}
(M_pH)^2=\frac{\rho}{3},\quad M_p=2.4354\,10^{18}\,{\rm GeV}, 
\end{equation}
where $\rho=T_{0}^{0}$ is the total gravitating energy density of the Universe and, as is usual in the context of general relativity, the Planck mass $M_p$ incorporates the factor $8\pi$ in the definition, \req{eq:etaMp}.

The divergence freedom of the Einstein equations requires  divergence freedom of the stress energy tensor $T^{\mu\nu}$, a condition which  reads for a homogeneous Universe
\begin{equation}\label{div_T}
\frac{\dot\rho}{\rho+P}=-3H .
\end{equation}
Combining \req{hubble_eq} with \req{div_T} shows that time change occurs at scale $\tau\propto M_p/\sqrt{\rho}$. In the domain of interest the energy density $\rho$ is  characterized  by the electron mass  $\rho \propto m_e^4$. The scale $m_e$ is related both to the key energy component of the Universe at the time of neutrino freeze-out and the ambient temperature. We  thus recognize the time scale to be characterized by  $\tau\equiv M_p/m_e^2=6.12$\,s; the actual time scale is close to 1s considering the presence of many degrees of freedom.

Using the timescale $\tau$, and scaling all momenta, energies, energy densities, pressures, and temperatures by the appropriate power of $m_e$ we can combine  all scale dependent parameters in the Einstein-Boltzmann equation. We thus find  
\begin{equation}\label{eta_bol}
\partial_tf-pH\partial_pf=\eta \frac{C[f]}{E},
\end{equation}
where in the interaction strength $\eta$, \req{eq:etaMp}, we include  the $G_F^2$ factor  common to all of the neutrino interaction matrix elements.

Aside from the $\theta_W$ dependence of the matrix elements seen in tables \ref{table:nu_e_reac} and \ref{table:nu_mu_reac}, the complete dependence on natural constants is now contained in a single dimensionless interaction strength parameter $\eta$ with the vacuum present day value,
\begin{equation}\label{eta_def}
\eta_0\equiv \left.M_p m_e^3 G_F^2\right|_0  = 0.04421 .
\end{equation}
If the dominant component of the electron mass originates in the Higgs mechanism, we find somewhat different scaling   $\eta  \propto g_{Ye}^3M_p/v$, where Yukawa electron coupling is introduced $ g_{Ye}\simeq v/m_e$ .

The discussion we presented is only focused on the normalization by natural constants of  the collision term. The magnitude of the scattering integrals also depends on the magnitude of the scaled temperature $T/m_e$.  In particular, in the limit $T/m_e<1$ the scattering integrals involving $e^\pm$ neutrino scattering are suppressed exponentially by a factor $e^{-m_e/T}$ or $e^{-2m_e/T}$ due to the diminished presence of $e^\pm$ pairs. However, our objective in writing \req{eta_bol} was not to isolate the leading order behavior, but rather to separate out all dependence on dimensioned natural constants and isolated them in the interaction strength parameter $\eta$.  This means that, as a dynamical system, the solutions of the dimensionless form \req{eta_bol} depends only on the parameters $\eta$ and $\sin^2\theta_W$, and hence all quantities computed from solutions of the Boltzmann equation that are dimensionless, such as $N_\nu$, can also only depend on $\eta$ and  $\sin^2\theta_W$.  Of course dimensioned quantities, for example the magnitude of freeze-out temperatures, still have to be scaled appropriately i.e. energies must be multiplied by $m_e$ and times must be multiplied by the timescale $\tau$, and so dimensioned quantities will show an additional dependence on natural constants.

Our argument that there are only two dimensionless variables of interest, $\eta, \sin^2\theta_W$, relies on the fact that there is only one particle scale parameter that enters the energy density and collision integrals, namely $m_e$. This is so since for $T\in (0.1,3)$\,MeV, muons are too heavy, $m_\mu=105.66$ MeV, the baryon energy density  controlled by $M_B-\mu_B$ is too small, and all other energy components in Universe are completely negligible. Thus though in principle $N_\nu=N_\nu(\eta,\sin^2\theta_W, m_e/m_\mu, m_e/(M_B-\mu_B))$, we can safely ignore  all additional dimensionless quantities. Furthermore, given our hypothesis that a modification of SM parameters in the early Universe could contribute to $N_\nu$, there is also a contribution to the Universe dynamics from the rate of change of these parameters. We assume that any such rate of change is small enough to be insignificant and will not discuss it further.

The dependence of  $N_\nu(\eta,\sin^2 \theta_W)$  will be the key result of this work and is presented below in the section \ref{sec:param_dep}. Qualitatively, it is apparent that an increase in $N_\nu$  requires increased coupling strength $\eta$. The dependence on $\sin^2\theta_W$  is much less obvious in view of the gauge boson  mass $M_W,M_Z $ variation, see figure \ref{fig:W_Z_masses}. The key  question we aim to resolve in this work is how sensitive  is $N_\nu$  to  a change in $\eta$ and $\sin^2\theta_W$. 

\section{Solving the Relativistic Boltzmann Equation}\label{sec:math}
\subsection{Emerging Chemical Nonequilibrium Method}\label{sec:boltz_solve}
We solve the Boltzmann equation \req{boltzmann_p} by the spectral method detailed in \cite{Birrell_orthopoly}.  We give only a brief outline of the method here.  Our approach is adapted to systems near kinetic equilibrium (i.e. equilibrium momentum distribution) but not necessarily chemical equilibrium (i.e. allowing for non-equilibrium particle number yield), allowing for potentially large reheating.  In other words, the method performs best when the distribution is of the form
\begin{equation}
f(t,p)=f_\Upsilon(t,p)(1+\phi(t,p)),\hspace{2mm} f_\Upsilon(t,p)=\frac{1}{\Upsilon^{-1} e^{p/T}+1}
\end{equation}
where $\phi$ is small and $T$ and $\Upsilon$ are the dynamical effective temperature and fugacity (i.e. phase space occupation parameter) respectively. Since we adapt both $T$ and $\Upsilon$ as function of time, we employ  a moving (in Hilbert space) frame, in  which  the orthogonal polynomial basis  dynamically evolves to suit the problem. 

Our approach should be contrasted with the method used in \cite{Esposito2000,Mangano2002}, which we call the chemical equilibrium method, that studied neutrino freeze-out using a fixed orthogonal polynomial basis generated by the chemical equilibrium weight
\begin{equation}
f_c=\frac{1}{e^y+1},\hspace{2mm} y=a(t)p.
\end{equation}
We note in the above that the temperature scaling is also assumed, that is $Ta(t)=$Const. In our approach we allow for reheating of the effective temperature to occur and thus also $T$, like  $\Upsilon$, evolves in time independently.

The deviation from chemical equilibrium is characterized in $f_\Upsilon$ by the fugacity $\Upsilon$.  A non-equilibrium $\Upsilon\neq 1$ builds up during neutrino freeze-out, specifically in the temperature range where the process $e^+e^-\rightarrow \nu\bar\nu$ is too slow to equilibrate particle number but $e^\pm \nu\rightarrow e^\pm \nu$ scattering is still able to equilibrate momentum.  The introduction of chemical non-equilibrium through $\Upsilon\ne 1$  contrasts with the chemical equilibrium method described above.  

The chemical equilibrium method is appropriate for the physical regime studied  in \cite{Esposito2000,Mangano2002}, wherein neutrinos are almost entirely decoupled by the time of $e^\pm$ annihilation and therefore there is little time for reheating of neutrinos or the development of chemical non-equilibrium.  However, for our purposes, namely the characterization of $N_\nu(\eta,\sin^2\theta_W)$, we must use a method that does not rely on small coupling for its effectiveness, hence we were motivated to develop the method described here. A comparison with the results of the chemical equilibrium method is found in section \ref{sec:comp}. We refer to Ref.\cite{Birrell_orthopoly} for further discussion and detailed validation of the method we present. 

After changing variables $z=p/T$, we will solve \req{boltzmann_p} by expanding $\phi$ in the basis of orthonormal polynomials, $\hat\psi_i(z)$, generated by the parametrized weight function
\begin{equation}\label{weight}
w(z)\equiv w_\Upsilon(z)\equiv \frac{z^2}{\Upsilon^{-1} e^z+1}
\end{equation}
on the interval $[0,\infty)$
\begin{equation}
\phi(t,z)=\sum_{i=0}^\infty b^i(t)\hat\psi_i(z).
\end{equation}
By convention, they are indexed so that $\hat\psi_j$ has degree $j$. This choice of the weight is physically motivated by the phase space  of practically massless neutrinos emerging into a chemical non-equilibrium distribution. 

The Boltzmann equation then results in an equation for the mode coefficients~\cite{Birrell_orthopoly}
\begin{align}\label{b_eq}
\dot b^k=& \left(H+\frac{\dot{T}}{T}\right)\sum_i A_i^k(\Upsilon)b^i-\frac{\dot{\Upsilon}}{\Upsilon}\sum_i B_i^k(\Upsilon)b^i+\langle\frac{1}{f_\Upsilon E}C[f],\hat\psi_k\rangle
\end{align}
where the matrices  $A$ and $B$ are
\begin{align}\label{A_B_matrices}
A^k_i(\Upsilon)\equiv&\langle\frac{z}{f_k}\hat\psi_i\partial_zf_k,\hat\psi_k\rangle+\langle z\partial_z \hat\psi_i,\hat\psi_k\rangle=\langle\frac{-z}{1+\Upsilon e^{-z}}\hat\psi_i,\hat\psi_k\rangle+\langle z\partial_z \hat\psi_i,\hat\psi_k\rangle,\\
B^k_i(\Upsilon)\equiv &\Upsilon\left(\langle\frac{1}{f_k}\frac{\partial f_k}{\partial\Upsilon}\hat\psi_i,\hat\psi_k\rangle+\langle\frac{\partial\hat{\psi}_i}{\partial \Upsilon},\hat\psi_k\rangle\right)=\langle\frac{1}{1+\Upsilon e^{-z}}\hat\psi_i,\hat\psi_k\rangle+\Upsilon\langle\frac{\partial\hat{\psi}_i}{\partial \Upsilon},\hat\psi_k\rangle.
\end{align}
For details on how to construct the inner products $\langle\frac{\partial\hat{\psi}_i}{\partial \Upsilon},\hat\psi_k\rangle$ we refer to Appendix A of Ref.\cite{Birrell_orthopoly}. 

The dynamics of the effective temperature and fugacity are fixed by the requirement that $f_\Upsilon$ captures the number density and energy density of the full distribution $f$, leaving $\phi$ to describe only the non-thermal distortions.   In practice, this implies that $b^0(t)=b^1(t)=0$ and a minimum of only two degrees of freedom (or modes), $T$ and $\Upsilon$, are required for our method.  See Ref.~\cite{Birrell_orthopoly} for details on the resulting evolution equations for $T(t)$ and $\Upsilon(t)$.  In contrast, we note that the minimum number modes required for the chemical equilibrium method is four.

\subsection{Collision Integral Inner Products}\label{sec:coll_simp}
In order to solve for the mode coefficients, the inner products of  collision integrals  with respect to the weight function \req{weight},
\begin{equation}
R_k\equiv\langle\frac{1}{f_\Upsilon E_1}C[f_1],\hat\psi_k\rangle,
\end{equation}
must be computed.  
\begin{align}\label{collision_integrals}
R_k
=&\int_0^\infty \hat\psi_k(z_1)C[f_1](z_1) \frac{z_1^2}{E_1}dz_1\\
=&\frac{1}{2}\int \hat\psi_k(z_1)\int\left[f_3(p_3)f_4(p^4)f^1(p_1)f^2(p_2)-f_1(p_1)f_2(p_2)f^3(p_3)f^4(p^4)\right]\\
&\hspace{20mm}\times S |\mathcal{M}|^2(s,t)(2\pi)^4\delta(\Delta p)\prod_{i=2}^4\frac{d^{3}p_i}{2(2\pi)^3E_i}\frac{z_1^2}{E_1}dz_1,\notag\\
=&\frac{2(2\pi)^3}{8\pi}T_1^{-3}\int G_k(p_1,p_2,p_3,p_4)S |\mathcal{M}|^2(s,t)(2\pi)^4\delta(\Delta p)\prod_{i=1}^4\frac{d^{3}p_i}{2(2\pi)^3E_i},\\
=&2\pi^2T_1^{-3}\int G_k(p_1,p_2,p_3,p_4)S |\mathcal{M}|^2(s,t)(2\pi)^4\delta(\Delta p)\prod_{i=1}^4 \delta_0(p_i^2-m_i^2)\frac{d^4p_i}{(2\pi)^3}\\
G_k=&\hat\psi_k(z_1)\left[f_3(p_3)f_4(p_4)f^1(p_1)f^2(p_2)-f_1(p_1)f_2(p_2)f^3(p_3)f^4(p_4)\right],\hspace{2mm} f^i=1- f_i.
\end{align}
The matrix element for a $2-2$ reaction can be written as a function of the Mandelstam variables $s, t, u$, of which only two are independent, defined by 
\begin{align}\label{Mandelstam}
&s=(p_1+p_2)^2=(p_3+p_4)^2,\\
&t=(p_3-p_1)^2=(p_2-p_4)^2,\\
&u=(p_3-p_2)^2=(p_1-p_4)^2,\\
&s+t+u=\sum_i m_i^2,
\end{align}
and we will consider this done for the analysis that follows.

Note that $R_k$ only uses information about the distributions at a single spacetime point, and so we can work in a local orthonormal basis for the momentum.  Among other things, this implies that $p^2=p^\alpha p^\beta\eta_{\alpha\beta}$ where $\eta$ is the Minkowski metric
\begin{equation}
\eta_{\alpha\beta}=\diag(1,-1,-1,-1).
\end{equation}

From \req{collision_integrals}, we see that a crucial prerequisite of our spectral method is the capability to evaluate  integrals of the type
\begin{align}\label{coll_ip}
&M\equiv\int G(p_1,p_2,p_3,p_4) S |\mathcal{M}|^2(s,t) (2\pi)^4\delta(\Delta p)\prod_{i=1}^4 \delta_0(p_i^2-m_i^2)\frac{d^4p_i}{(2\pi)^3},\\
&G(p_1,p_2,p_3,p_4)=g_1(p_1)g_2(p_2)g_3(p_3)g_4(p_4)
\end{align}
for some functions $g_i$. Even after eliminating the delta functions in \req{coll_ip}, we are still left with an $8$-dimensional integral.  To facilitate numerical computation, we analytically reduce this expression down to fewer dimensions.  Fortunately, the systems we are interested in have a large amount of symmetry that we can utilize for this purpose.  

The distribution functions we are concerned with are isotropic in some frame defined  by a unit timelike vector $U$, i.e. they depend   on the four-momentum $p_i$ only through the quantities $p_i\cdot U$  and $p_i^2=m_i^2$.  The same is true of the basis functions $\hat\psi_k$ and hence we can assume the  $g_i$ depend only on $p_i\cdot U$ as well.  In \cite{Madsen,Dolgov_Hansen} approaches are outlined that reduce integrals of this type down to $3$ dimensions.   However, the integrand one obtains from these methods is only piecewise smooth or has an integration domain with a complicated geometry.  This can present difficulties for numerical integration routines and so we take an alternative approach that, for the scattering kernels found in $e^\pm$, neutrino interactions, reduces the problem to three iterated integrals (but not quite to a three dimensional integral) and results in an integrand with better smoothness properties. Depending on the integration method used, this can significantly reduce the numerical cost of evaluating the collision integrals.  The derivation presented expands on what is found in Ref.\cite{letessier2002hadrons}.

\subsection{Simplifying the Collision Integral}\label{sec:coll_simp3}
Our strategy for simplifying the collision integrals is as follows.  We first make a change of variables designed to put the 4-momentum conserving delta function in a particularly simple form, which allows us to analytically use that delta function to reduce the integral from $16$ to $12$ dimensions.  The remaining four delta functions, which impose the mass shell constraints, are then seen to reduce to integration over a product of spheres.  The simple form of the submanifold that these delta function restrict the integration to allows the method described in \ref{app:delta} to analytically evaluate all four of the remaining delta functions simultaneously.  During this process, the isotropy of the system in the frame given by the 4-vector $U$ allows us to reduce the dimensionality further, by analytically evaluating several of the angular integrals. 

The change of variables that simplifies the 4-momentum conserving delta function is given by
\begin{equation}
p=p_1+p_2,\hspace{2mm} q=p_1-p_2, \hspace{2mm} \pp =p_3+p_4, \hspace{2mm} \qp =p_3-p_4.
\end{equation}
The Jacobian of this transformation is $1/2^{8}$.  Therefore, changing variables in the delta functions we find

\begin{align}\label{M_eq1}
M=
b\!\!\int &G\left((p+q)\cdot U/2,(p-q)\cdot  U/2,(\pp +\qp )\cdot U/2, (\pp -\qp )\cdot U/2\right)S |\mathcal{M}|^2 \notag\\ 
\times   &\Theta(p^0-|q^0|)\Theta\left((\pp )^0-|(\qp )^0|\right) \delta(p-\pp )\delta\left((p+q)^2/4-m_1^2\right)\delta\left((p-q)^2/4-m_2^2\right)\notag\\[0.1cm]
\times&\delta\left((\pp +\qp )^2/4-m_3^2\right)\delta\left((\pp -\qp )^2/4-m_4^2\right)d^4pd^4qd^4\pp d^4\qp  
\end{align}
where $\Theta(x)$ denotes the Heaviside function, $b= {1}/{256(2\pi)^8}$, and $U$ is the four velocity characterizing the isotropic frame as discussed above.

Using the coarea formula, theorem \ref{vol_form_coarea} in \ref{app:delta}, we decompose this into an integral over $s=p^2$, the center of mass energy, and also eliminate the integration over $\pp $ using $\delta(p-\pp )$,
\small
\begin{align}\label{M_eq2}
M= b\!\!&\int_{s_0}^\infty \!\!\int \delta(p^2-s)\! \left[\int S |\mathcal{M}|^2  F(p,q,\qp )  \Theta(p^0-|q^0|)\Theta\left((\pp )^0-|(\qp )^0|\right) \delta\left((p+q)^2/4-m_1^2\right)\right.\notag\\
\times&\delta\left((p-q)^2/4-m_2^2\right)\delta\left((p+\qp )^2/4-m_3^2\right)\delta\left((p-\qp )^2/4-m_4^2\right)d^4qd^4\qp \bigg]d^4pds,\\[0.1cm]
F(p,&q,\qp )=G\left((p+q)\cdot U/2,(p-q)\cdot  U/2,(p+\qp )\cdot U/2, (p-\qp )\cdot U/2\right),\notag\\[0.1cm]
s_0=&\max\{(m_1+m_2)^2,(m_3+m_4)^2\}.\notag
\end{align}
\normalsize
The lower bound on $s$ comes from the fact that both $p_1$ and $p_2$ are future timelike and hence 
\begin{equation}
p^2=m_1^2+m_2^2+2p_1\cdot p_2\geq m_1^2+m_2^2+2m_1m_2=(m_1+m_2)^2.
\end{equation}
The other inequality is obtained using $p=\pp $. 

Note that the integral in brackets in \req{M_eq2} is invariant under $SO(3)$ rotations of $p$ in the frame defined by $U$.  Therefore we obtain
\begin{align}\label{K_def}
M=&  b\!\!\int_{s_0}^\infty\!\!\int_0^\infty K(s,p)\frac{4\pi |\vec p|^2}{2p^0}d|\vec p|ds,\hspace{2mm} p^0=p\cdot U=\sqrt{|\vec p|^2+s},\\
K(s,p)=&\!\int\!\! S |\mathcal{M}|^2  F(p,q,\qp )  \Theta(p^0-|q^0|)\Theta\left((\pp )^0-|(\qp )^0|\right)  \delta\left((p+q)^2/4-m_1^2\right)\,\notag\\[0.1cm]
\times&\delta\!\left((p-q)^2/4-m_2^2\right)\delta\left((p+\qp )^2/4-m_3^2\right)\delta\left((p-\qp )^2/4-m_4^2\right)d^4qd^4\qp 
\end{align}
where $|\vec p|$ denotes the norm of the spacial component of $p$ and in the formula for $K(s,p)$, $p$ is any four vector whose spacial component has norm $|\vec p|$ and timelike component $\sqrt{|\vec p|^2+s}$. Note that in integrating over $\delta(p^2-s)dp^0$, only the positive root was taken, due to the Heaviside functions in the $K(s,p)$.

We now simplify $K(s,p)$ for fixed but arbitrary $p$ and $s$ that satisfy $p^0=\sqrt{|\vec p|^2+s}$ and $s>s_0$.  These conditions imply $p$ is future timelike, hence we can we can change variables in  $q,\qp $ by an element of $Q\in SO(1,3)$ so that 
\begin{equation}\label{Eq:U1}
Qp=(\sqrt{s},0,0,0), \hspace{2mm} QU=(\alpha,0,0,\delta)
\end{equation}
where
\begin{equation}\label{Eq:U2}
\alpha=\frac{p\cdot U}{\sqrt{s}}, \hspace{2mm} \delta=\frac{1}{\sqrt{s}}\left((p\cdot U)^2-s \right)^{1/2}.
\end{equation}
Note that the delta functions in the integrand imply $p\pm q$ is  timelike (or null if the corresponding mass is zero).  Therefore $p^0>\pm q^0$ iff $p\mp q$ is future timelike (or null).  This condition is preserved by $SO(1,3)$, hence $p^0>|q^0|$ in one frame iff it holds in every frame.  Similar comments apply to $p^0>|(\qp )^0|$ and so $K(s,p)$ has the same formula in the transformed frame as well.

We now evaluate the measure that is induced by the delta functions, using the method given in \ref{app:delta}.  We have the constraint function
\begin{equation}
\Phi(q,\qp )=\left((p+q)^2/4-m_1^2,(p-q)^2/4-m_2^2,(p+\qp )^2/4-m_3^2,(p-\qp )^2/4-m_4^2\right)
\end{equation}
and must compute the solution set $\Phi(q,\qp )=0$. Adding and subtracting the first two components and the last two respectively, we have the equivalent conditions
\begin{align}
\frac{s+q^2}{2}=m_1^2+m_2^2,\hspace{2mm} p\cdot q=m_1^2-m_2^2, \hspace{2mm}\frac{s+(\qp )^2}{2}=m_3^2+m_4^2,\hspace{2mm} p\cdot \qp =m_3^2-m_4^2.
\end{align}
If we let $(q^0,\vec{q})$, $((\qp )^0,\vec q\,')$ denote the spacial components in the frame defined by $p=(\sqrt{s},0,0,0)$ we have the equivalent conditions
\begin{align}\label{coord_conditions}
&q^{0}=\frac{m_1^2-m_2^2}{\sqrt{s}},\hspace{2mm} |\vec{q}|^2=\frac{(m_1^2-m_2^2)^2}{s}+s-2(m_1^2+m_2^2),\\
&(\qp )^{0}=\frac{m_3^2-m_4^2}{\sqrt{s}}, \hspace{2mm} |\vec q\,'|^2=\frac{(m_3^2-m_4^2)^2}{s}+s-2(m_3^2+m_4^2).
\end{align}
Note that the above formulas, together with $s\geq s_0$, imply
\begin{equation}
\frac{|q^0|}{p^0}\leq \frac{|m_1^2-m_2^2|}{(m_1+m_2)^2}<1
\end{equation}
and similarly for $\qp $.  Hence the Heaviside functions are identically equal to $1$ under these conditions and we can drop them from the formula for $K(s,p)$.

The conditions \req{coord_conditions} imply that our solution set is a product of spheres in $\vec{q}$ and $\vec q\,'$, as long as the conditions are consistent i.e. so long as $|\vec{q}|,|\vec q\,'|>0$. To see that this holds for almost every $s$, first note
\begin{equation}
\frac{d}{ds}|\vec{q}|^2=1-\frac{(m_1^2-m_2^2)^2}{s^2}>0
\end{equation}
since $s\geq (m_1+m_2)^2$.  At $s=(m_1+m_2)^2$, $|\vec{q}|^2=0$.  Therefore, for $s>s_0$ we have $|\vec{q}|>0$ and similarly for $\qp $.  Hence we have the result
\begin{equation}
\Phi^{-1}(0)=\{q^{0}\}\times B_{|\vec{q}|}\times \{(\qp )^{0}\}\times B_{|\vec q\,'|}.
\end{equation}
where $B_r$ denotes the radius $r$ ball centered at $0$.  We will parametrize this by spherical angular coordinates in $q$ and $\qp $. 

 We now compute the induced volume form.  First consider the differential 
\begin{equation}
 D\Phi=\left( \begin{array}{c}
\frac{1}{2}(q+p)^\alpha\eta_{\alpha\beta}dq^\beta \\
\frac{1}{2}(q-p)^\alpha\eta_{\alpha\beta}dq^\beta\\
\frac{1}{2}(\qp +p)^\alpha\eta_{\alpha\beta}dq^{'^\beta}  \\
\frac{1}{2}(\qp -p)^\alpha\eta_{\alpha\beta}dq^{'^\beta}  \end{array} \right).
\end{equation}
Evaluating this on the coordinate vector fields $\partial_{q^0}$, $\partial_r$ we obtain
\begin{equation}
 D\Phi(\partial_{q^0})=\left( \begin{array}{c}
\frac{1}{2}(q^0+\sqrt{s}) \\
\frac{1}{2}(q^0-\sqrt{s}) \\
0\\
0 \end{array} \right), \hspace{2mm}  D\Phi(\partial_{r})=\left( \begin{array}{c}
-\frac{1}{2}|\vec{q}| \\
-\frac{1}{2}|\vec{q}| \\
0\\
0 \end{array} \right)=\left( \begin{array}{c}
-\frac{1}{2}r \\
-\frac{1}{2}r \\
0\\
0 \end{array} \right).
\end{equation}
Similar results hold for $\qp $.  Therefore we have the determinant
\begin{equation}
\det\left( \begin{array}{cccc}
D\Phi(\partial_{q^0}) & D\Phi(\partial_{r}) & D\Phi(\partial_{(\qp )^0}) & D\Phi(\partial_{\rp }) \end{array} \right)=\frac{s}{4}r\rp .
\end{equation}
By corollary \ref{induced_vol_eq} and \req{delta_def} in \ref{app:delta}, this implies that the induced volume measure is
\begin{align}
&\delta\left((p+q)^2/4-m_1^2\right)\delta\left((p-q)^2/4-m_2^2\right)\delta\left((p+\qp )^2/4-m_3^2\right)\delta\left((p-\qp )^2/4-m_4^2\right)d^4qd^4\qp \notag\\
=&\frac{4}{sr\rp }i_{(\partial_{q^0},\partial_{r},\partial _{(\qp )^0},\partial_{\rp })}\left[\left(r^2\sin(\phi)dq^0drd\theta d\phi\right)\wedge\left((\rp )^2\sin(\phip )d(\qp )^0d\rp d\thetap d\phip \right)\right]\notag \\
=&\frac{4r\rp }{s}\sin(\phi)\sin(\phip )d\theta d\phi d\thetap d\phip
\end{align}
where
\begin{align}
r=&\frac{1}{\sqrt{s}}\sqrt{(s-(m_1+m_2)^2)(s-(m_1-m_2)^2)},\notag\\ \rp =&\frac{1}{\sqrt{s}}\sqrt{(s-(m_3+m_4)^2)(s-(m_3-m_4)^2)}
\end{align}
and $i$ is the interior product (i.e. contraction) operator as described in \ref{app:delta}.
 
Consistent with our interest in the Boltzmann equation, we assume $F$ factors as
\begin{align}
 F(p,q,\qp )=&F_{12}\left((p+q)\cdot U/2,(p-q)\cdot U/2)F_{34}((p+\qp )\cdot U/2,(p-\qp )\cdot U/2\right)\\
\equiv &G_{12}(p\cdot U,q\cdot U)G_{34}(p\cdot U,\qp \cdot U).
\end{align}
For now, we suppress the dependence on $p$, as it is not of immediate concern. In our chosen coordinates where $U=(\alpha,0,0,\delta)$ we have
\begin{equation}
q\cdot U=q^0\alpha-r\delta\cos(\phi)
\end{equation}
and similarly for $\qp $.

To compute
\begin{align}\label{K_angular1}
K(s,p)=\frac{4r\rp }{s}\int \left[\int S |\mathcal{M}|^2 (s,t) G_{34}\sin(\phip )d\thetap  d\phip \right] G_{12}\sin(\phi)d\theta d\phi
\end{align}
first recall 
\begin{align}
t=&(p_1-p_3)^2=\frac{1}{4}(q- \qp )^2=\frac{1}{4}\left(q^2+(\qp )^2-2(q^0(\qp )^0-\vec{q}\cdot \vec q\,')\right),\\
\vec{q}\cdot\vec q\,'&=r\rp \left(\cos(\theta-\thetap )\sin(\phi)\sin(\phip )+\cos(\phi)\cos(\phip )\right).
\end{align}
Together, these imply that the integral in brackets in  \req{K_angular1} equals
\begin{align}
&\int_0^\pi\int_0^{2\pi} S |\mathcal{M}|^2 (s,t(\cos(\theta-\thetap )\sin(\phi)\sin(\phip )+\cos(\phi)\cos(\phip )))\\
&\hspace{15mm}\times G_{34}\left((\qp )^0\alpha-\rp \delta\cos(\phip )\right)\sin(\phip )d\thetap  d\phip \notag\\
=&\int_{-1}^1\int_0^{2\pi} S |\mathcal{M}|^2 (s,t(\cos(\psi)\sin(\phi)\sqrt{1-y^2}+\cos(\phi)y)) G_{34}\left((\qp )^0\alpha-\rp \delta y\right)d\psi dy.
\end{align}

Therefore 
\begin{align}\label{M_simp}
K(s,p)=&\frac{8\pi r\rp }{s}\int_{-1}^1 \left[\int_{-1}^1\left(\int_0^{2\pi} S |\mathcal{M}|^2\left (s,t(\cos(\psi)\sqrt{1-y^2}\sqrt{1-z^2}+yz)\right)d\psi\right)\right.\\
&\hspace{26mm}\times G_{34}\left((\qp )^0\alpha-\rp \delta y\right) dy\bigg] G_{12}(q^0\alpha-r\delta z)dz\notag
\end{align}
where
\begin{align}\label{t_def}
t(x)=&\frac{1}{4}\left((q^0)^2-r^2+((\qp )^0)^2-(\rp )^2-2q^0(\qp )^0+2r\rp x\right),\\
=&\frac{1}{4}\left((q^0-(\qp )^0)^2-r^2-(\rp )^2+2r\rp x\right).
\end{align}
This is as far as we can simplify things without more information about the form of the matrix elements. In \ref{app:nu_matrix_elements} we apply this method and analytically simplify \req{M_simp} for each of the processes in tables \ref{table:nu_e_reac} and \ref{table:nu_mu_reac} as much as possible and in the process we show that $M$ can be written in terms of three iterated integrals for each of these processes.

\subsection{Validation}\label{sec:comp}
We solve the Boltzmann equation, \req{boltzmann}, for both the electron neutrino distribution and the combined $\mu$, $\tau$ neutrino distribution, including all of the  processes from tables \ref{table:nu_e_reac} and \ref{table:nu_mu_reac} in the scattering operator, together with the Hubble equation for $a(t)$, \req{hubble_eq}.  The total energy density  appearing in the Hubble equation consists of the contributions from both neutrino distributions as well as chemical equilibrium $e^\pm$ and photon distributions at some common temperature $T_\gamma$.  The dynamics of $T_\gamma$ are fixed by the divergence freedom of the total stress energy tensor, \req{div_T}.  In addition, we include the QED corrections to the $e^\pm$ and photon equations of state as described in \cite{Mangano2002}.

We compared the results of numerically evaluating the collision integrals using our method as given in sections \ref{sec:coll_simp} and \ref{app:nu_matrix_elements} with the method used by Ref.\cite{Dolgov_Hansen} and validated that results agree, up to numerical integration tolerance. To compare our results from solving the Boltzmann equation with Ref.~\cite{Mangano2005}, where neutrino freeze-out was simulated using $\sin^2\theta_W=0.23$ and $\eta=\eta_0$, in table \ref{table:method_comp} we present $N_\nu$ together with the following quantities
\begin{equation}\label{eq:comp}
z_{\rm fin}=T_\gamma a,\qquad \rho_{\nu 0}=\frac{7}{120}\pi^2T^{4}, 
\qquad \delta\bar\rho_{\nu}= \frac{\rho_\nu}{\rho_{\nu 0}}-1.
\end{equation}
\begin{table}[h] 
\centering
\begin{tabular}{|c|c|c|c|c|c|}
\hline
Method &Modes&$z_{\rm fin}$ & $\delta\bar\rho_{\nu_e}$& $\delta\bar\rho_{\nu_{\mu,\tau}}$ & $N_{\nu}$ \\
\hline
Chemical Eq& 4 &1.39785 &0.009230 &0.003792 &3.044269\\
\hline
Chemical Non-Eq& 2&1.39784 &0.009269 & 0.003799&3.044383 \\
\hline
Chemical Non-Eq& 3&1.39785&0.009230 & 0.003791&3.044264 \\
\hline
\end{tabular}
\caption{Comparison of neutrino freeze-out results obtained in Ref.~\cite{Mangano2005} (top line) with those obtained using the methods described above, which allow for a reduced number of expansion modes.}
\label{table:method_comp}
\end{table}

The quantities presented in \req{eq:comp} and table \ref{table:method_comp}  were introduced in Ref.~\cite{Mangano2005}, but some additional discussion is in order. 
\begin{enumerate}
\item
The quantity $z_{\rm fin}$ measures the deviation of the photon temperature $T_\gamma$ from  the `free streaming' temperature $T\propto 1/a$, i.e the temperature  of a (hypothetical) particle species that is completely decoupled throughout the domain of temperature  considered. Therefore, $z_{\rm fin}=T_\gamma/T$ is the measure of the amount of reheating  photons underwent due to the annihilation of $e^\pm$. For the case of already completely decoupled neutrinos, whose temperature is in this case just the free-streaming temperature, according to \req{T_nu_T_gamma}
\begin{equation}
\left.z_{\rm fin}\right|_{\nu \ \rm decoupled}=(11/4)^{1/3}\approx 1.401.
\end{equation}
For the case of some $e^\pm$ annihilation occurring while neutrinos are still coupled, one expects this value to be slightly reduced, due to the transfer of some $e^\pm$ entropy  into neutrinos. This is reflected in the values seen in table \ref{table:method_comp}.
\item
$\rho_{\nu0}$ is the energy density of a single massless fermion with two degrees of freedom and temperature equal to the free-streaming temperature. In other words, it is the energy density of a single neutrino species, assuming it decoupled before reheating. Consequently, $\delta\bar\rho_\nu$ is the fractional increase in the energy density of a coupled neutrino species, due to its partial participation in reheating.
\end{enumerate}

The top entry in table \ref{table:method_comp} correspond to the reference values from Ref.~\cite{Mangano2005}. The next two lines present our results that use the chemical non-equilibrium method which, as we show, allows for a smaller basis set. We show 2 and 3 modes, respectively which compare  with  the 4 modes case for the equilibrium method. The value of $N_\nu$ we obtain agrees for the case of 2 modes with that found by \cite{Mangano2005}, up to their cited error tolerance. 

Considering both the improved smoothness properties of integrands developed in sections \ref{sec:coll_simp} and \ref{app:nu_matrix_elements} and  the reduction in the required number of modes for  the chemical non-equilibrium method, our approach  with the minimum number $2$  of required modes  is found to be more than $20\times$ faster than the chemical equilibrium method with its minimum number $4$ of required modes. This computational performance improvement makes it possible to explore the neutrino freeze-out process for many different circumstances and parameter sets employing a desktop PC.

\section{Dependence of Neutrino Freeze-out on Standard Model Parameters}\label{sec:param_dep}
\subsection{Neutrino Freeze-out Temperature}\label{sec:process_types}
SM parameters impact $N_\nu$ by changing how long the neutrinos remain coupled to the annihilating $e^\pm$ and thereby impacting the amount of energy and entropy transfer.  In other words, the neutrino freeze-out temperature is modified.  Before we present the dependence of $N_\nu$ on $\theta_W$ and $\eta$ we first consider in detail the freeze-out temperatures of neutrinos with an initial focus on the conventional SM parameters. 

In the literature one finds estimates of freeze-out temperatures based on a comparison of Hubble expansion with neutrino scattering length  and considering only number changing (i.e. chemical) processes, see e.g. Ref.\cite{kolb}. We employ a similar definition of freeze-out temperature in the context of the Boltzmann equation and refine the results by noting that there are three different freeze-out processes:
\begin{enumerate}
\item Neutrino chemical freeze-out: the neutrino pair number changing annihilation process 
\begin{equation}
l+\bar l\Leftrightarrow \nu_l+\bar \nu_l
\end{equation}
which we will see decouples at the highest temperature.
\item Neutrino kinetic freeze-out: The sharing of energy between leptons and neutrinos by way of scattering
\begin{equation}
l+\nu\Leftrightarrow  l+  \nu 
\end{equation}
stops at a lower energy compared to neutrino number changing processes.
\item  Collisions between neutrinos are capable of re-equilibrating energy within and between flavor families. These processes end at a yet lower temperature and the neutrinos will be truly free-streaming from that point on. 
\end{enumerate}

The attentive reader will notice that we have omitted here a discussion of flavor neutrino oscillations. If it weren't for the differences between the matrix elements for the interactions between $e^\pm$ and $\nu_e$ on one hand and $e^\pm$ and $\nu_\mu,\nu_\tau$ on the other, oscillations would have no effect on the flow of entropy into neutrinos and hence no effect on $N_\nu$.  However, there are differences and they do lead to a modification of $N_\nu$.  In Ref.~\cite{Mangano2005} the impact of oscillations on neutrino freeze-out for the present day measured values of $\theta_W$ and $\eta$ was investigated. It was found that while oscillations redistributed energy amongst the neutrino flavors, the impact on $N_\nu$ was negligible. We have therefore ignored neutrino oscillation effects in our study as we do not have a clear idea why for other  values of $\eta$ and $\theta_W$ the redistribution of neutrino energy would create any larger effect than already determined. Once the relevant neutrino properties are fully understood, the precision of the results we present could possibly be improved by incorporating the effect of neutrino oscillations.

\subsubsection{Scattering Length and Freeze-out Temperature}\label{freeze_out_temp}
The notion of freeze-out temperature is conceptually useful, but within the Boltzmann approach there is no such precise temperature, as the freeze-out process is gradual, with low energy neutrinos freezing-out before high energy ones. Thus  a procedure to determine the freeze-out condition can only be  approximate.  However, the differences that arise through investigating the freeze-out of the three different classes of processes while natural constants are varied help us to understand the results which will be presented below.

To define the freeze-out condition we follow the standard procedure~\cite{kolb}: we compare the distance $L$ traveled by a particle between two scattering processes to the characteristic Universe expansion length $L_H=c/H$. The crossing of the Hubble-length with the neutrino scattering length   produces an estimate of the decoupling temperatures. To obtain the scattering length $L$ we begin with \req{n_div} for the fractional rate of change of comoving particle number
\begin{align}
\frac{\displaystyle\frac{d}{dt}(a^3 n)}{\displaystyle a^3n}=\frac{g_\nu}{2\pi^2n}T^2\int C[f]zdz.
\end{align}
This expression includes both forward and back-reactions, producing  the net change. 

However, we would rather  count the number of interactions.  For that reason, we consider only one direction for the process  and define as the rate of interest
\begin{align}
r\equiv\frac{g_\nu}{2\pi^2n}T^2\int \tilde C[f]zdz
\end{align}
where the forward-reaction operator $\tilde C[f]$ is computed as in \req{coll} except with $F$ replaced by 
\begin{equation}
\tilde F=f_1(p_1)f_2(p_2)f^3(p_3)f^4(p_4).
\end{equation}
If particle type $1$ also participates in the reverse of the reaction $1+2\rightarrow 3+4$ (i.e. it is the same as the final particle $3,4$) then a corresponding term for the reverse reaction must also be added.  The key point is that  we are counting reactions, and not the net particle number change which requires by detailed balance also a negative contribution.

\begin{figure}
\centerline{\includegraphics[width=0.47\columnwidth]{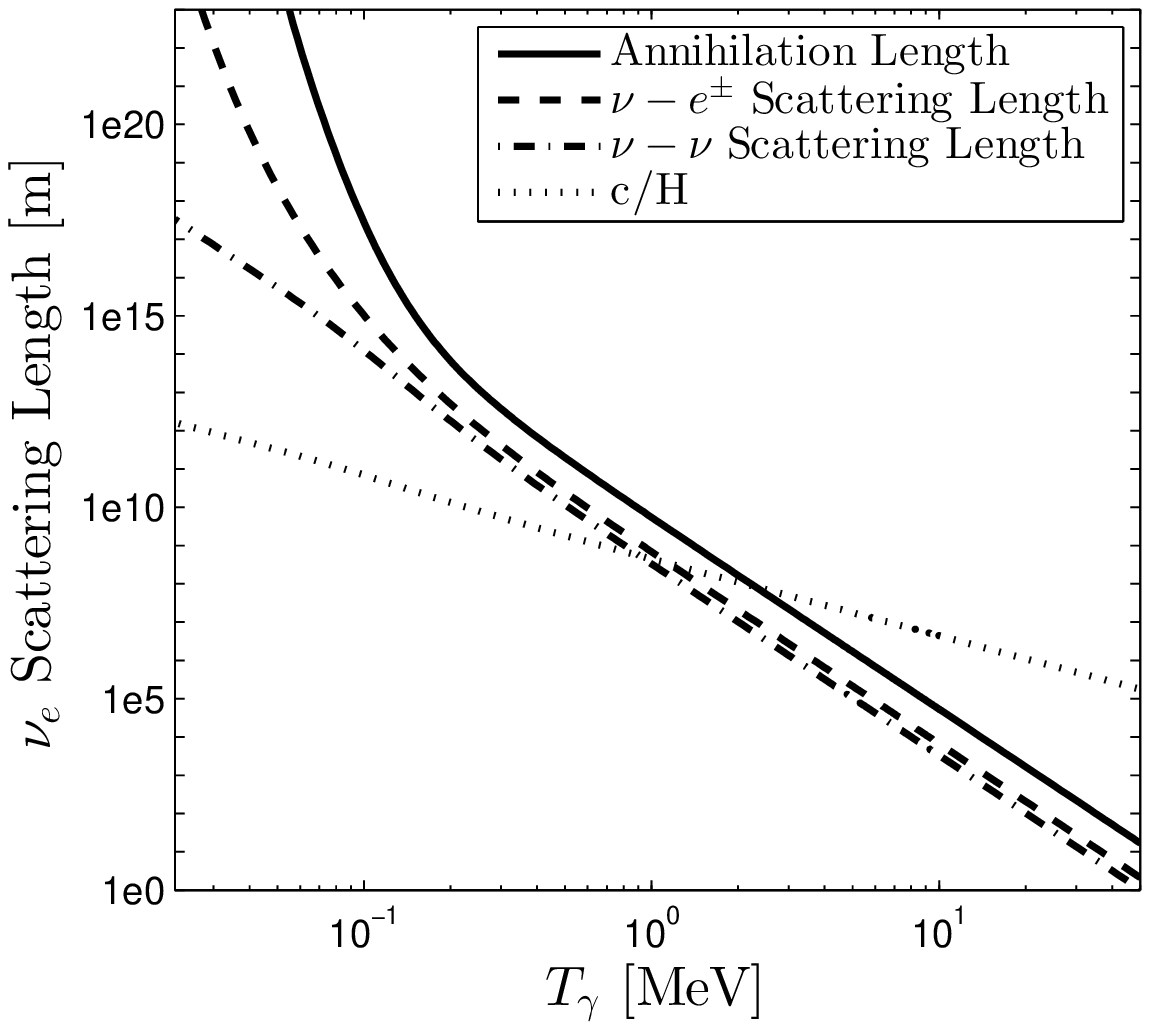}
\includegraphics[width=0.47\columnwidth]{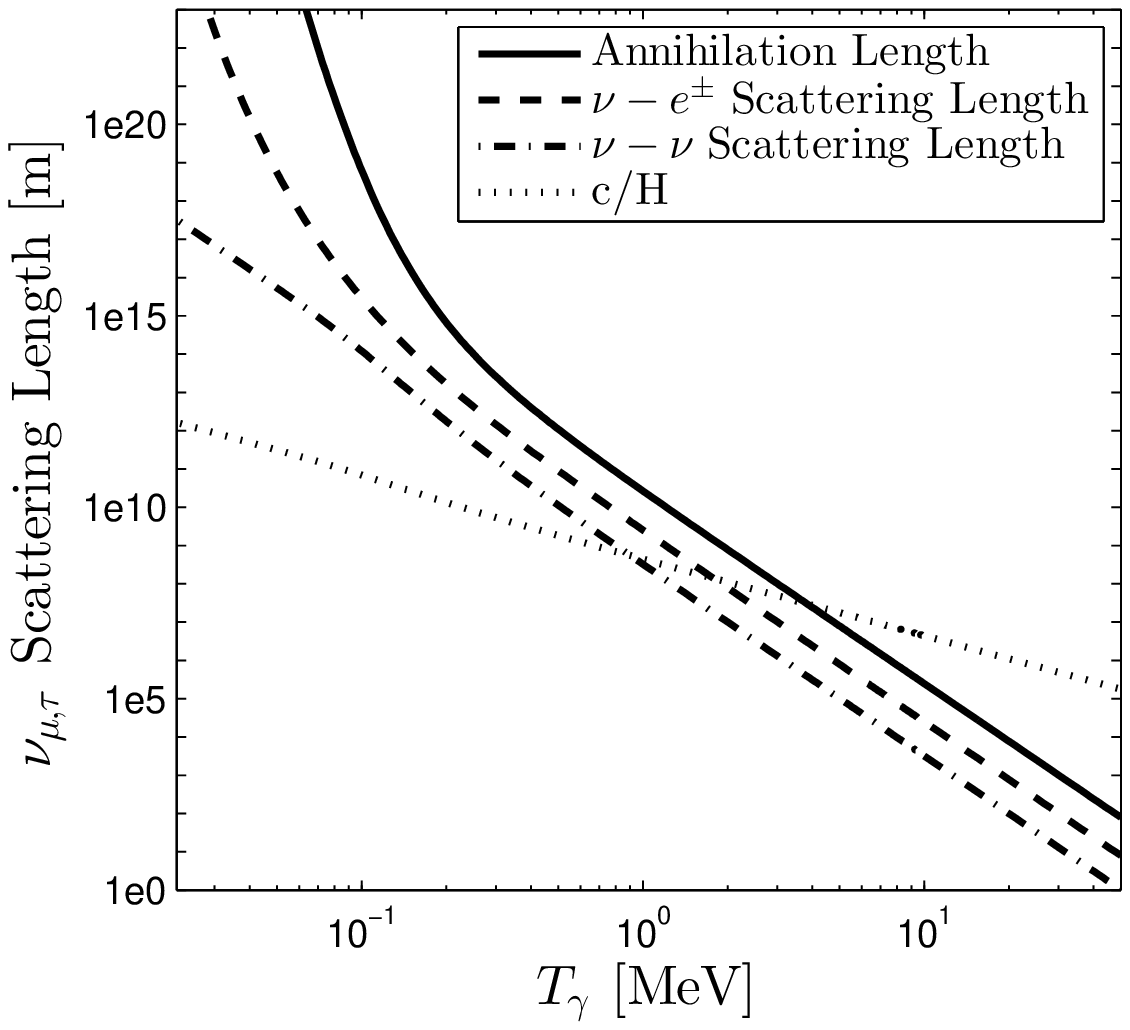}}
\caption{Comparison of Hubble parameter to neutrino scattering length for various types of processes for $\sin^2\theta_W=0.23$ and $\eta=\eta_0$: on left for $\nu_e$ and on right for $\nu_\mu$ and $\nu_\tau$. }\label{fig:scatt_length}
 \end{figure}

Using the average velocity, which for neutrinos is $\bar v=c=1$, we obtain according to this procedure the scattering length
\begin{align}
L&\equiv\frac{\bar v}{r}=\displaystyle{ \frac{\displaystyle\int_0^\infty\frac{z^2dz }{\Upsilon^{-1}e^z+1} }{\displaystyle\sum_i\displaystyle\int_0^\infty \frac{\tilde C_i[f] z^2 dz}{E}}}
\end{align}
where the sum is over the one way scattering operators for the collection of processes of interest. 

Like in  Ref.\cite{kolb}, $L$ can be compared to the Hubble length $L_H=c/H$ and the temperature at which $L=L_H$ we call the freeze-out temperature for that reaction.  Figure \ref{fig:scatt_length} shows $L_H$ together with the scattering length for the three types of neutrino reactions described above, on the left for $\nu_e$ and the right for $\nu_\mu,\nu_\tau$. The flavor dependence  is due to  charge current W-mediated interactions being present only for $\nu_e$. The solid lines in Figure \ref{fig:scatt_length}  corresponds to the chemical freeze-out scattering length, the dashed line corresponds to the kinetic freeze-out scattering length, and  the  dot-dashed line corresponds to re-equilibration  processes within the neutrino fluid. 

Using our Boltzmann equation solver, we have characterized the dependence of neutrino freeze-out temperature on $\sin^2\theta_W$ and $\eta$, shown in figure \ref{fig:freezeoutT}   via the method described above. The left panels show the result for $\nu_e$ and the right for $\nu_\mu,\nu_\tau$.  The SM results corresponding to the crossings in  figure \ref{fig:scatt_length} are read out along the vertical lines in the top two panels. 

\begin{figure}
\centerline{\includegraphics[width=0.47\columnwidth]{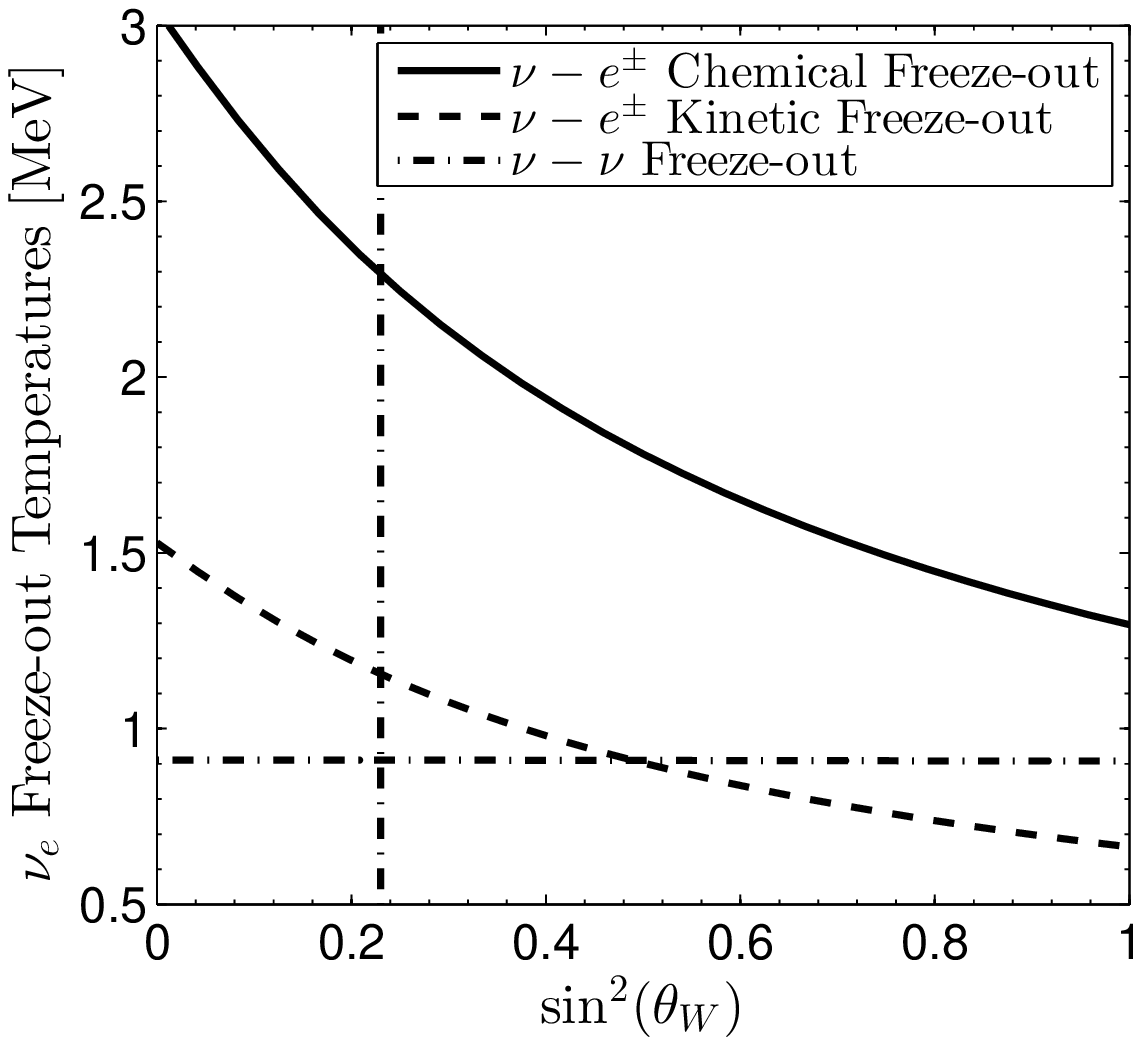}
\hspace{1mm}\includegraphics[width=0.47\columnwidth]{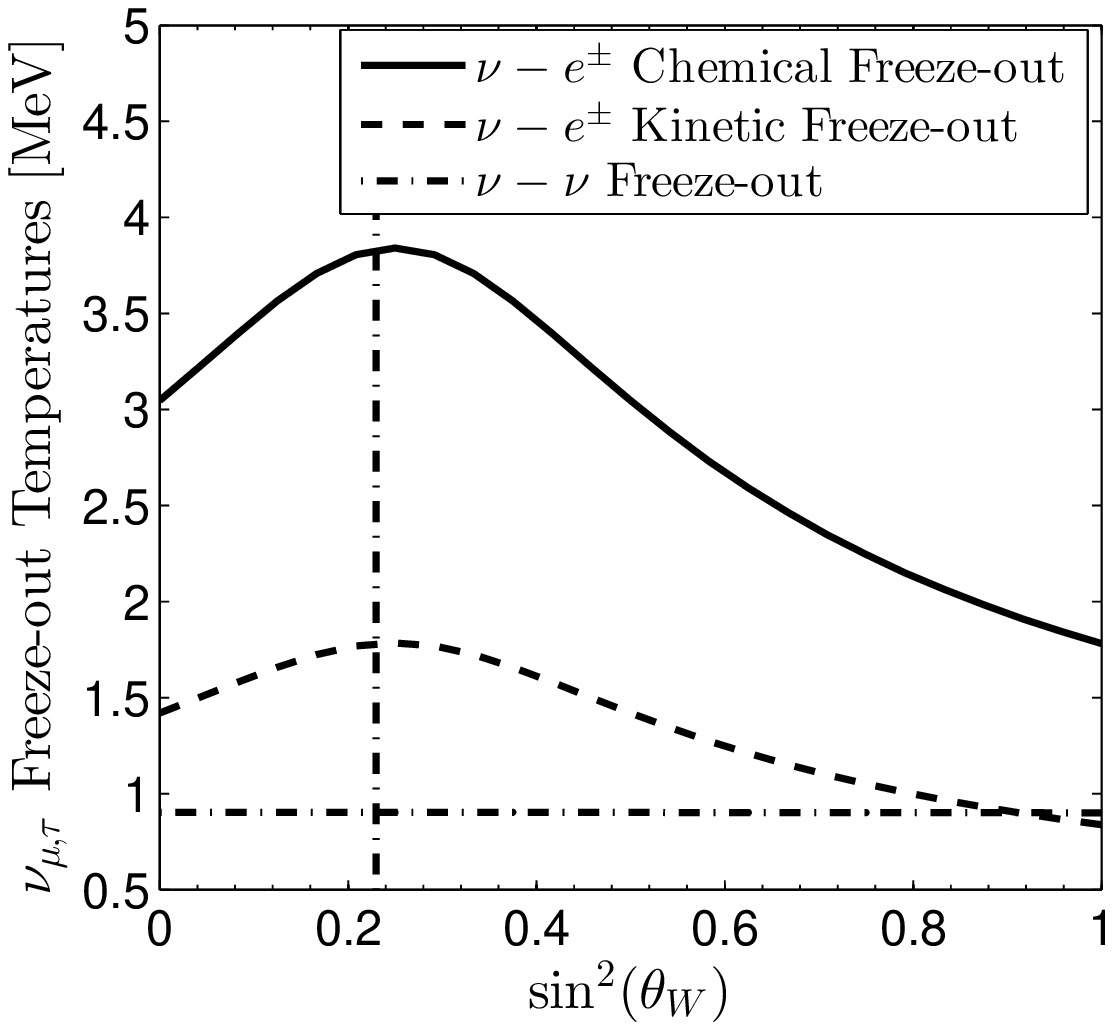}}
\centerline{\includegraphics[width=0.47\columnwidth]{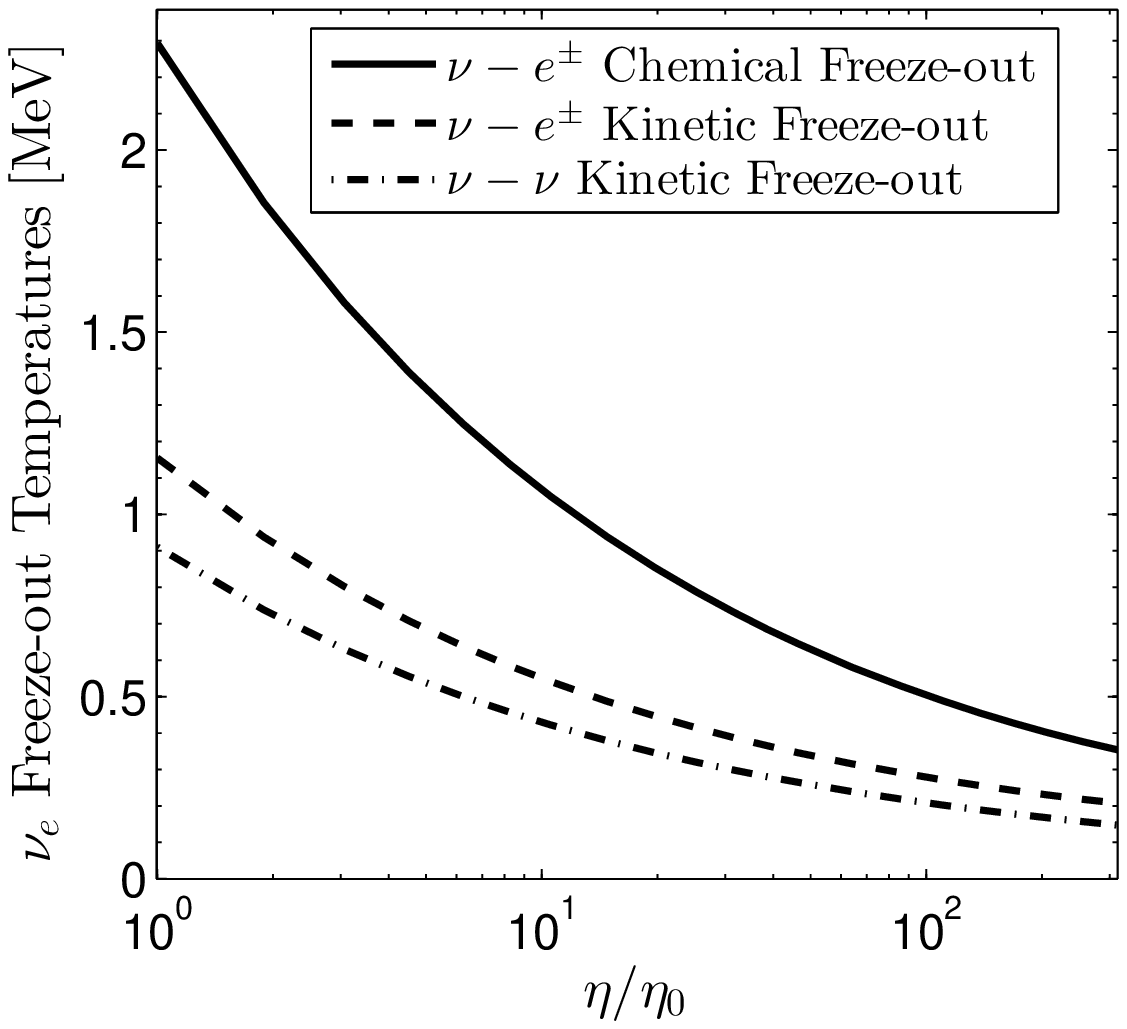}
\hspace{1mm}\includegraphics[width=0.47\columnwidth]{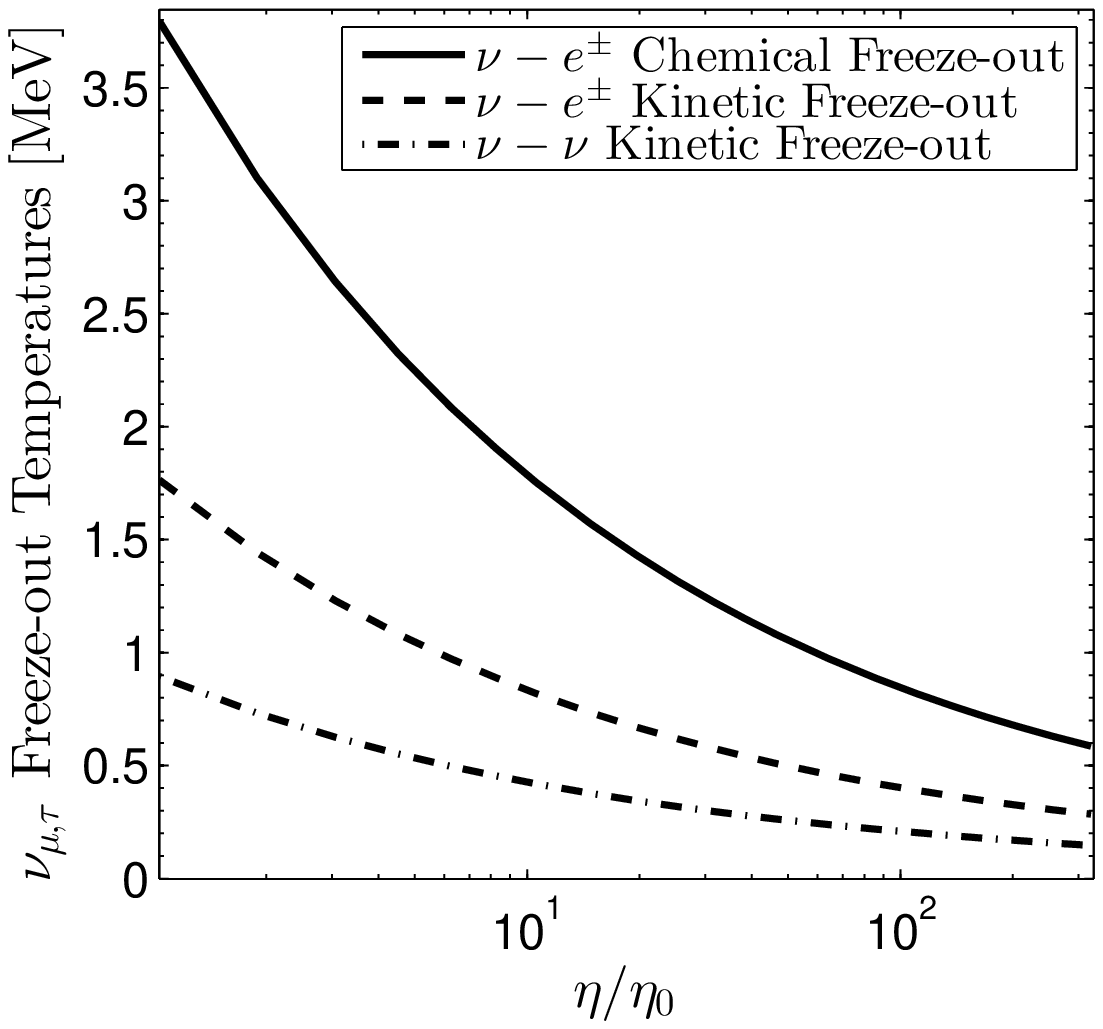}}
\caption{Freeze-out temperatures for electron neutrinos (left) and $\mu$, $\tau$ neutrinos (right) for the three types of processes, see text. Top panels as functions of $\sin^2\theta_W$ for $\eta=\eta_0$, vertical line is $\sin^2\theta_W=0.23$; bottom panels  as a function  of relative change in interaction strength $\eta/\eta_0$ obtained  for $\sin^2\theta_W=0.23$ .}\label{fig:freezeoutT}
 \end{figure}

We see in figure  \ref{fig:freezeoutT} that as a function of $\theta_W$ the behavior of $T_{\nu_e}$ is opposite to that of $T_{\nu_\mu}$ and $T_{\nu_\tau}$ for $\sin^2\theta_W<0.25$, and for $\sin^2\theta_W>0.25$  all neutrino processes tend to decouple at lower temperature with increasing $\sin^2\theta_W$.  Neutrino-neutrino scattering process remains constant, as their matrix elements are independent of Weinberg angle.  An increased coupling strength $\eta$  is equivalent to an increase in $G_F$, resulting in the neutrinos interacting with the $e^\pm$ plasma down to lower temperatures.  Hence the monotonic decreasing behavior of the freeze-out temperature as a function of $\eta/\eta_0$ seen in figure  \ref{fig:freezeoutT} is expected. The SM values are seen at the left margin of the bottom panels. 

As discussed above, neutrino oscillations are not considered in these results.  We expect that incorporating oscillations would lead to a smaller difference between the freeze-out temperatures of the different neutrino flavors, and would likely pull up the drop in $\nu_\mu$, $\nu_e$ freeze-out temperature at small $\sin^2\theta_W$, at least to some degree. We recall that for other observable quantities we discuss in the following, the effect of neutrino oscillations is expected to be negligible~\cite{Mangano2005}.

\subsection{Dependence of $N_\nu$ on Standard Model Parameters}\label{ssec:nnudep}
The main result  of this paper is the  dependence of $N_\nu$ on  the SM parameters   $\sin^2\theta_W$ and $\eta$, \req{eq:etaMp}. These results are shown in the left pane of figure \ref{N_nu_params}, presented as a function of  Weinberg angle $\sin^2 \theta_W $ for $\eta/\eta_0=1,5,10,26$. The effect of an increase in both parameters above the vacuum values superpose  in the parameter range  considered, amplifying the effect and generating a significant increase in  $N_\nu\to 3.5$. 
\begin{figure}
\centerline{\includegraphics[width=0.5\columnwidth]{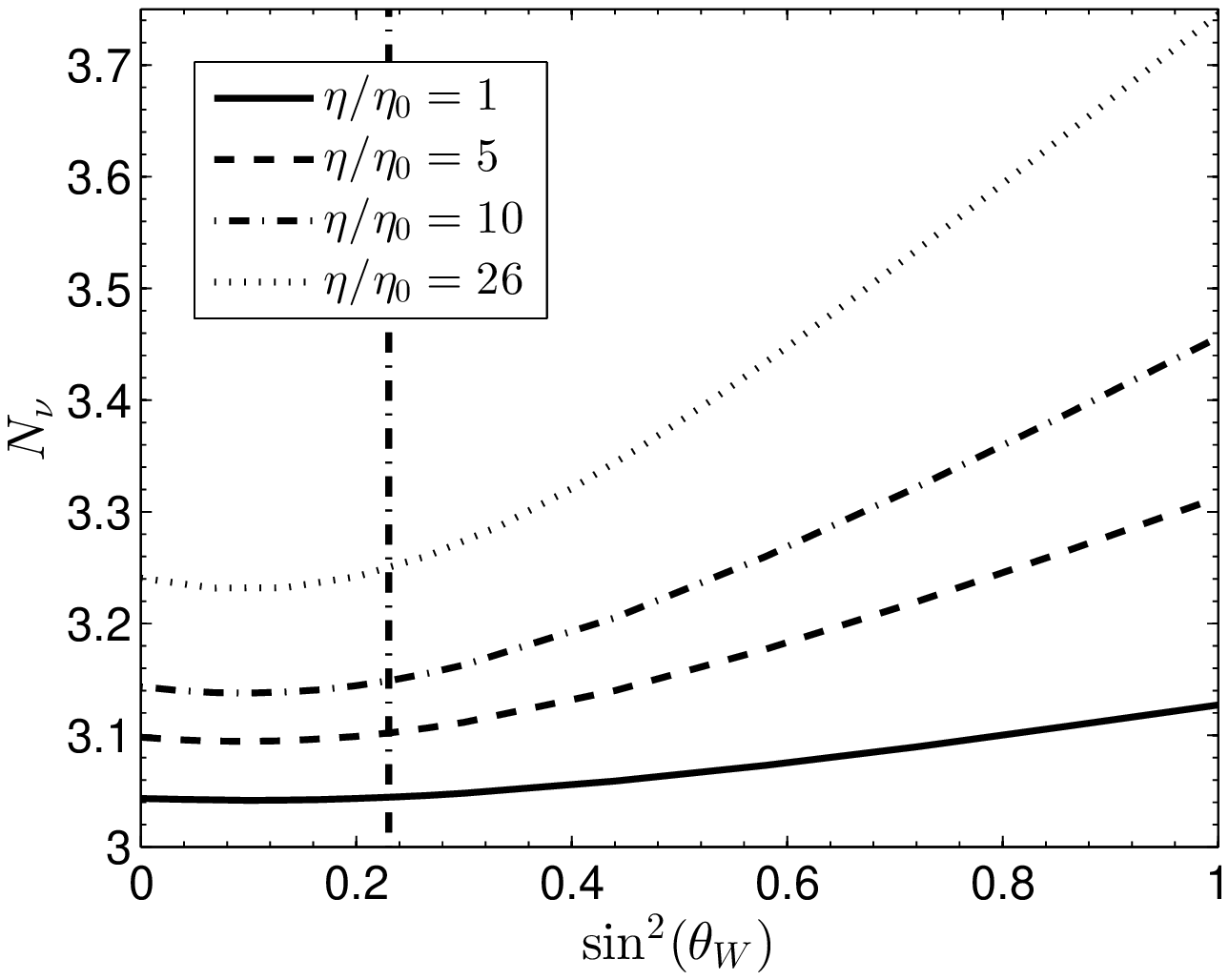}
\hspace{0mm}\includegraphics[width=0.48\columnwidth]{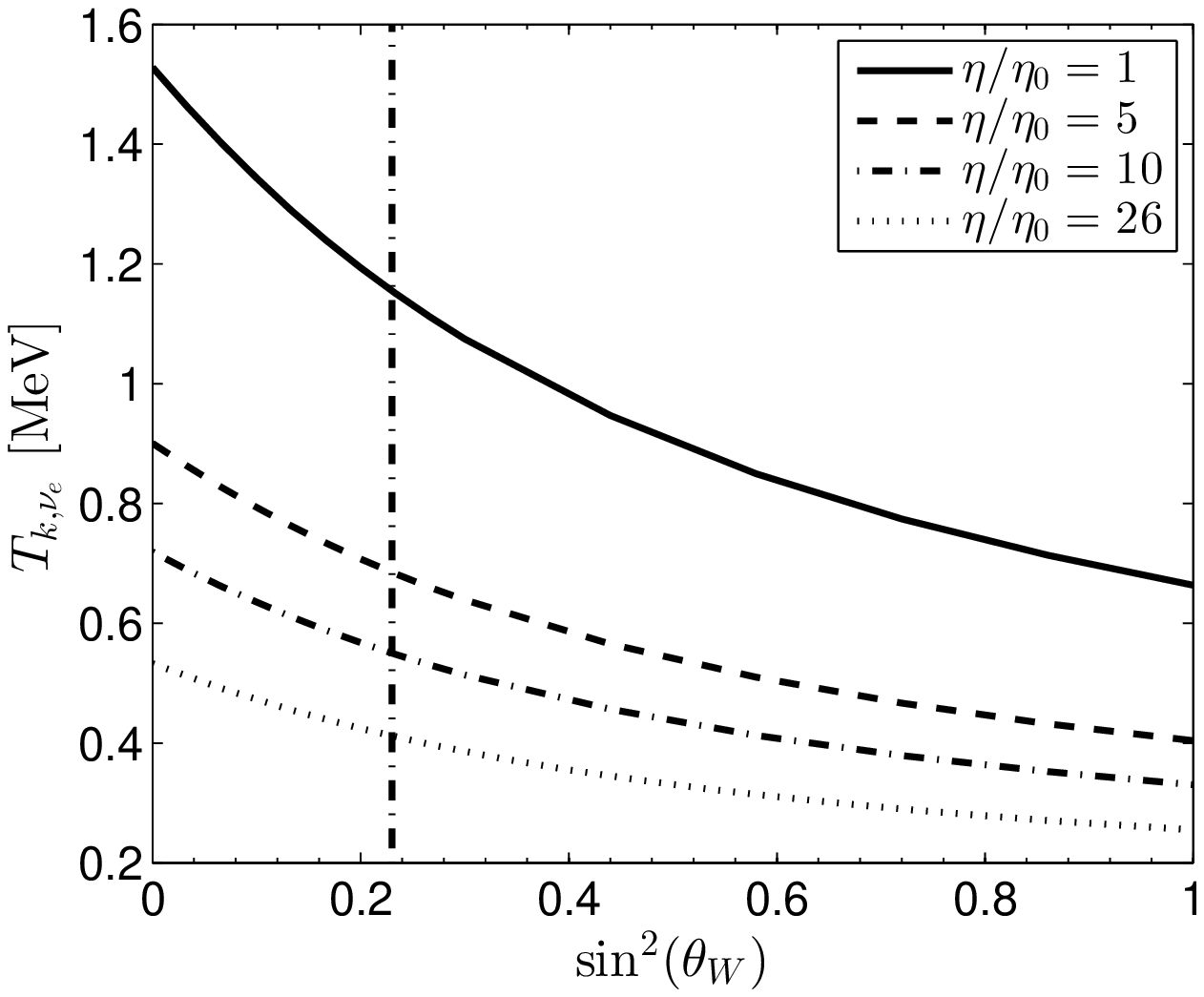}}
\caption{The effective number of neutrinos  $N_\nu$ (left pane), and   the kinetic freeze-out temperature  $T_{k,\nu_e}$ (right pane) as a function of Weinberg angle for  several values of $\eta/\eta_0=1,2,5,10,26$. Vertical line is $\sin^2\theta_W=0.23$.}
\label{N_nu_params}  
 \end{figure}

The last to freeze-out from the kinetic equilibrium is $\nu_e$ and we show the associated value of freeze-out temperature in the right pane of figure \ref{fig:freezeoutT}. Since the freeze-out for present day vacuum value SM parameters occurs well above the electron mass, the reheating effect is normally small.  for  the present day vacuum value of Weinberg angle puts the $\nu_\mu,\nu_\tau$ freeze-out temperature,  $T_{k,\nu_e}=1.2$MeV, seen in the right pane of figure \ref{fig:freezeoutT}. With increasing $\eta$ and $\sin^2\theta_W$ the temperature drops but even for the most extreme cases shown it always remains well above the onset of nucleosynthesis at about $T<0.150$MeV.

We performed a least squares fit of $N_\nu$ over the range $0\leq \sin^2\theta_W\leq 1$, $1\leq \eta/\eta_0\leq 10$ shown in figure \ref{N_nu_params}, obtaining a result with relative error less than $0.2\%$,
\begin{align}
N_\nu=&3.003-0.095\sin^2\theta_W +0.222\sin^4\theta_W  -0.164\sin^6\theta_W \notag\\
+&\sqrt{\frac{\eta}{\eta_0}}\left(0.043+0.011\sin^2\theta_W +0.103\sin^4\theta_W\right).
\end{align}
$N_\nu$ is monotonically increasing in $\eta/\eta_0$ with dominant behavior  scaling as $\sqrt{ \eta/\eta_0}$. Monotonicity is to be expected, as increasing $\eta$ decreases the freeze-out temperature and the longer neutrinos are able to remain coupled to $e^\pm$, the more energy and entropy from annihilation is transferred to neutrinos.

The bounds on $N_\nu$ from the Planck analysis \cite{Planck} can be  used to constrain time or temperature variation of $\sin^2\theta_W$ and $\eta$.  In  Figure \ref{N_nu_domain} the dark (green) color shows the combined range of  variation of natural constants  compatible with CMB+BAO and the light (teal) color shows  the extension in the range of  variation of  natural constants for CMB+$H_0$, both at a $68\%$ confidence level. The dot-dashed line within the dark (green) color  delimits   this latter domain. The dotted line shows the limit of a 5\% change in $N_\nu$.    Any increase in  $\eta/\eta_0$ and/or $\sin^2\theta_W$ moves the value of $N_\nu$ into the domain favored by current experimental results. 

\begin{figure}
\centerline{\includegraphics[width=0.75\columnwidth]{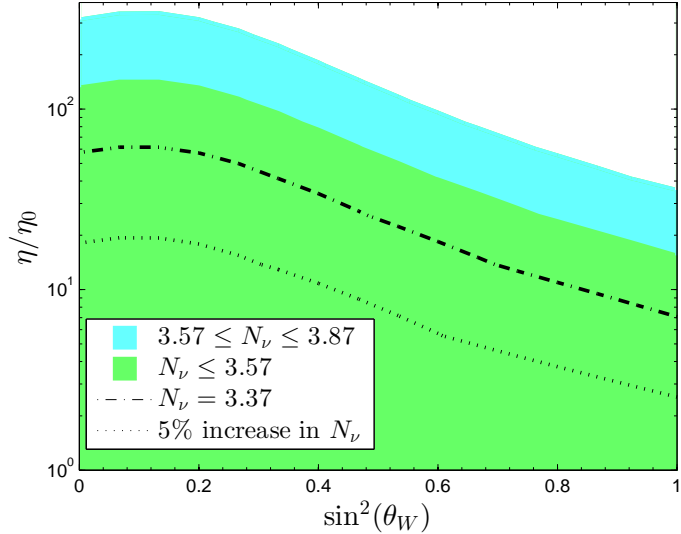}
}
\caption{$N_\nu$ bounds in the $\eta/\eta_0, \sin^2\theta_W$ plane. Dark (green) for $N_\nu\in (3.03,3.57)$ corresponding to Ref.\cite{Planck} CMB+BAO analysis and light(teal) extends the region to $N_\nu<3.87$ i.e. to CMB+$H_0$. Dot-dashed line delimits the 1s.d. lower boundary of the second analysis.}
\label{N_nu_domain}
 \end{figure}

Further parameter study is found in \ref{app:Tups}. In the figures \ref{fig:dist1} and  \ref{fig:dist2} and the data fits Eqs.(\ref{fit1}--\ref{fit4}) we complement the  $N_\nu$ results by showing the variation of the parameters that characterize the neutrino distributions after freeze-out:  the neutrino temperature, shown through the ratio of the reference photon to neutrino temperature $T_\gamma/T_\nu$ separately for $\nu_e$ and $\nu_\mu,\nu_\tau$  and well as the two fugacities $\Upsilon_{\nu_e}$ and   $\Upsilon_{\nu_\mu}=\Upsilon_{\nu_\tau}$.

\section{Connections}\label{sec:disc}
Our study interfaces with two other areas of physics:
\begin{itemize}
\item
Any change of natural constants that would be able to explain a measured variation in $N_\nu$ from SM expectations would need to be made consistent with the ensuing in evolution of the Universe Big Bang nucleosynthesis (BBN). Smoothness of time evolution of the natural constants and the known challenges that beset the BBN results present an interesting avenue of future work which we briefly describe in the following subsection \ref{subsec:BBNconnect}. 
\item
The effective number of neutrinos $N_\nu$ is a characterization of the relativistic energy content in the early Universe, see \req{eq:Nnu}, independent of its source. Thus, even given a conclusive measurement of $N_\nu>3$, there would still remain ambiguity in regard the origin of the effect. Specifically, any light particles that decouple at an earlier epoch  can contribute to the energy content of the invisible Universe. Two potential candidates we describe  below in subsection \ref{subsec:twoNnu}: the sterile neutrino,  and yet to be identified novel nearly `dark' Goldstone Bosons relating to broken symmetries at QGP phase transformation in the early Universe. 
\end{itemize}

\subsection{Connection to Big Bang Nucleosynthesis}\label{subsec:BBNconnect}

Big Bang nucleosynthesis is one of the pillars of modern observational cosmology. It is discussed in comprehensive review articles such as~\cite{boesgaard1985big,tytler2000review,burles2001big,fields2004big,Iocco:2008va} and places strong constraints on the state of the Universe in the temperature range $T_\gamma=100-10$keV. Variation of natural constants that impact the nuclear reaction rates or exansion of the Universe during the BBN era have been investigated, including in particular the time dependence of the neutron to proton mass ratio~\cite{Iocco:2008va,Pospelov:2010hj,Coc:2006sx}, the fine structure constant and deuteron binding energy~\cite{Coc:2006sx}, or Newton's constant (i.e. the Planck mass)~\cite{PhysRevLett.92.171301}. 

Natural constant modifications will not always connect neutrino freeze-out and BBN processes:
\begin{enumerate}
\item The parameter $\sin^2\theta_W$ controlling the relationship between charged and neutral weak currents, and the  interactions of neutrinos within the primordial plasma, does not factor prominantly into BBN, where only combinations that involve the Fermi coupling constant $G_F$ are so far recognized as significant. Therefore, changes in  $\sin^2\theta_W$ which for $\eta>1$ can affect the neutrino freeze-out processes rather strongly, are to best of current knowledge unconstrained by BBN.
\item The neutrino freeze-out remains in a domain of temperature $T>200$keV even for the strongest parameter changes we considered in figure \ref{fig:freezeoutT} right panel. On the other hand,  the  BBN era processes set in for $T<150$keV. Thus if one is willing to accept some fine tuning of the time dependence of natural constants,  there would never be a conflict of neutrino freeze-out modification by natural constants with BBN.
\item In the neutrino freeze-out process all relevant natural constants combine to the one parameter $\eta$, \req{eq:etaMp}.   If both $m_e$ and $G_F$ are varying independently,  effect of their increase can  compensate since it mainly afflicts the neutron abundance. However both effects would be compunded in the neutrino-freeze out process. This can  produce the desired increase in $N_\nu$ without affecting BBN. However, within the SM we expect a strong correlation between $m_e$ and $G_F$. If we assume that minimal SM coupling controls the electron mass, then $\eta\propto 1/v$, since $m_e^3\propto v^3$ while $G_F\propto 1/v^2$, see subsection \ref{ssec:matrix}. The controling scale  $v$ is due to Higgs vacuum structure,  believed to have decoupled from possible modifications near the BBN epoch.
\item Gravity  enters through $\eta\propto 1/\sqrt{G_N}$. As discussed in~\cite{PhysRevLett.92.171301}, the required large decrease in $G_N$  would conflict with BBN unless fine-tuned to phase out before the strong onset of BBN. Thus a combination of neutrino freeze-out process, BBN,  assuming  smoothness of $G_N$ in time and minimal coupling of electrons could set a very strong limit on variation of $G_N$ in the early Universe only a fraction of a second old.
\end{enumerate}
We believe that there is very likely only weak coupling between modifications we consider in the era of neutrino freeze-out and BBN. Thus we could  seek to understand the cosmological value of $N_\nu$ in terms of modifications of natural constants, and only then turn to answer the question how this can be kept consistent with the BBN processes.

In the above discussion we assumed that it is advisable not to perturb BBN. However,  not all is perfectly well with BBN. An outstanding problem is the observed abundance of $^{7}$Li, which is significantly smaller than the prediction of the standard BBN model, see for example Fig.~3 in  Ref.\cite{2011ARNPS..61...47F} and Fig.~5 in   Ref.\cite{cyburt2008update}. The situation with $^{6}$Li also raises concerns but there the looser constraints from reaction processes and after-BBN effects make the larger disagreement less compelling for the much smaller  $^{6}$Li yield. Various approaches to the  $^{7}$Li-problem have been investigated, including non-standard neutron sources~\cite{PhysRevD.90.085018}, nuclear resonances, or dark matter, Ref\cite{2011ARNPS..61...47F} provides comprehensive references on the latter two possibilities. The potential for an explanation of this effect within the context of late neutrino freeze-out modification of natural constants has not been explored, and it is not immediately obvious how this could work. It is possible that delayed decoupling of neutrinos could contributed to some key reaction, but future work is needed before anything definitive can be said.

\subsection{$N_\nu$ from Dark Radiation}\label{subsec:twoNnu}

In this paper, we considered the possibility of modifying $N_\nu$ via neutrinos sharing in  a greater fraction of the entropy of annihilating $e^\pm$, achieved by a change in natural constants.  As $N_\nu$ is only a measure of the relativistic energy density leading up to photon decoupling, a natural alternative mechanism for obtaining $N_\nu>3$ is the introduction of additional, presently not discovered, weakly interacting (effectively) massless particles.  As discussed in Refs.~\cite{Anchordoqui:2011nh,Anchordoqui:2012qu,Blennow:2012de,Steigman:2013yua,Birrell:2014connect}, such particles can contribute fractionally to $N_\nu$ depending on their degeneracy, Bose-Fermi nature, and freeze-out temperature.   For the study of the impact of such dark radiation on BBN see~\cite{Ruchayskiy:2012si,steigman2012neutrinos}.

Of particular relevance could be a so called light sterile neutrino~\cite{Abazajian:2012ys}, possibly the right handed complement to the left handed neutrinos. If such particles exist and freeze-out well before regular neutrinos, their contribution to $N_\nu$ would be subject to dilution by reheating~\cite{Birrell:2014connect} and thus would depend on when precisely they begin free-streaming.

These unknown dark `radiation' particles, as well as neutrinos, could have a mass that is at the scale of the temperature of photon decoupling $T_{\gamma 0}=0.25$ eV, for which an analysis of the Universe density fluctuations akin to Planck~\cite{Planck} would need to be adapted. We have  discussed in Ref.\cite{Birrell:2013_2} a consistent treatment of neutrino mass and $N_\nu$ in the case of a particular type of delayed massive neutrino  freeze-out. This approach is  the same as for dark radiation:  Near to $T_{\gamma 0}=0.25$ eV massive neutrinos are indistinguishable from massive dark radiation, which contributes as an additional particle with reduced contribution to $N_\nu$~\cite{Birrell:2014connect}.

Removing the degeneracy in the interpretation of $N_\nu$ as being due to the decoupling processes of neutrinos, or due to the presence of `dark' particles will naturally depend on other experimental information, such as the contribution to resolving the Li puzzle in BBN or other experimental impacts of dark particles, and of course a contribution from both avenues could be envisioned.

\section{Summary, Discussion and Conclusions}\label{sec:concl}
We have employed a novel spectral method Boltzmann solver and a new procedure for evaluating the Boltzmann scattering integrals in order to characterize the impact of a potential time and/or temperature variation of SM parameters on the effective number of neutrinos. Specifically, we identified a dimensionless combination of $m_e$, $M_p$, and $G_F$, called the interaction strength $\eta$, that, along with the Weinberg angle $\sin^2 \theta_W$, control neutrino freeze-out and the resulting value of the effective number of neutrinos, $N_\nu$.

\subsection{Novel Mathematical Tool}\label{sec:math_tool}
In order to carry this comprehensive study we have developed a novel approach to obtain Boltzmann Equation solutions. Our spectral method, which we call the emergent chemical non-equilibrium method, employs a moving (in Hilbert space) frame, in  which  the orthogonal polynomial basis  dynamically evolves to suit the problem.  Our approach as presented here makes several modifications that both improve its numerical speed and make it better suited to the regime we are investigating, namely the stronger coupling between neutrinos and $e^\pm$ that is obtained when SM parameters are varied, and that lead to an increase in $N_\nu$.  As detailed in the general presentation of the method \cite{Birrell_orthopoly}, the improvements are
\begin{enumerate}
\item  We allow a general time dependence of the effective temperature parameter $T$ i.e. we do not assume redshift temperature scaling $T\propto1/a$ -- this accommodates the effect of reheating.  Without this, the method would be very inefficient in systems with strong reheating,  eventually leading to a failure to converge when the reheating ratio exceeds $2$.
\item We have introduced a chemical non-equilibrium distribution in the weight function i.e. we introduced an evolving, time dependent  $\Upsilon$ which equals $1$ at high temperature, corresponding to chemical equilibrium, and allows for the emergence of chemical non-equilibrium $\Upsilon\neq 1$ during freeze-out.
\item We have introduced an additional factor of $z^2$ to the functional form of the weight as proposed in a different context in Refs.\cite{Wilkening,Wilkening2} which accounts in our approach for the effectively massless neutrino phase space.
\end{enumerate}

The salient feature is that we are letting the fugacity, $\Upsilon$ and temperature $T$  be time dependent and there is no requirement that  $\Upsilon\to 1$.  This should be contrasted with the method used in \cite{Esposito2000,Mangano2002}, which we call the chemical equilibrium method, that studied neutrino freeze-out using a fixed orthogonal polynomial basis generated by the chemical equilibrium weight and without the $z^2$ factor.  The chemical equilibrium method also assumes a particular temperature scaling  $Ta(t)=$Const.  In other words, the neutrino momenta are scaled by $1/a(t)$ instead of a dynamical effective temperature $T(t)$ as in our method. Such a method is effective for the weak reheating found for SM vacuum parameters, but it becomes less inefficient and eventually fails to converge as the reheating ratio increases.

Due to the inclusion of the neutrino phase space $z^2$ factor in the weight, and facilitated by the near thermal shape of the distribution, only two  modes corresponding to  $T$ and $\Upsilon$ are required to capture the energy density and number density of the neutrino distribution.  In comparison, the chemical equilibrium method, because it lacks the $z^2$ factor, requires a minimum of four modes.  We discussed how further important savings in computation time are arrived at by making the integrands of the collision integrals smooth functions. Overall, the speed up of solutions is at level 20 times or more.

\subsection{Primordial Variation of Natural Constants}
The question which we answer in this paper is: What neutrino decoupling in the early Universe can tell us about the values of natural constants when the Universe was about 1 second old and at an ambient temperature near to 1.2 MeV (14 billion degrees K). Our results were presented assuming that the Universe contains no other effectively massless particles but the three left handed neutrinos and corresponding, three right handed anti-neutrinos. 

We found that near to the physical value of the Weinberg angle  $\sin^2 \theta_W\simeq 0.23$ the effect of changing $\sin^2\theta_W$ on the decoupling of neutrinos is small. Thus as seen in Figure \ref{N_nu_params}  the dominant variance is due to the change  in the coupling strength $\eta/\eta_0$, \req{eq:etaMp}  and \req{eta_def}. The dotted line in  Figure \ref{N_nu_domain} shows that in order to achieve a change in $N_\nu$ at the level of up to 5\%, that is  $N_\nu\lesssim 3.2$,  both $\sin^2 \theta_W$ and $\eta/\eta_0$ must change significantly, with e.g. $\eta$ increasing by an order of magnitude.

Let us review what an increase in the strength parameter $\eta$ by factor 10 means, looking case by case on all the natural constant contributions as if each were responsible for the entire change:
\begin{itemize}
\item 
Considering that  $\eta\propto M_p\propto G_N^{-1/2}$ this translates into a decrease  in the strength of Gravity at neutrino freeze-out by a factor 100.  This effect would need to become much smaller by the time the age of the Universe is 1000 times longer (1s compared to 10 min) for Big Bang nucleosynthesis to be unaffected. This presumably means that, conversely, as we go further back in time we would need gravity to continue to rapidly become very much weaker yet. In models of emergent gravity we can  imagine a  `melting' of gravity in the hot primordial Universe. Whether such a model can be realized will be a topic for future consideration. The attractive aspect of Gravity weakening rapidly with increasing temperature is that for  exponentially disappearing $G_N\to 0$ as $t\to 0$ and/or $T\to \infty$ the dynamics can be arranged to be similar to an inflationary  Universe.
\item 
Since $\eta\propto m_e^3$, electron mass would need to go up `only' by factor 2.15 . Compared to all other particles the electron mass has an anomalously  low value. Appearance of a mechanism just when $T\simeq m_e$ that `restores' the electron mass to where intuition would like it to be, a few MeV, arising from  the systematics of other Yukawa Higgs coupling $g_{Ye}$ compared to the Yukawa coupling of other charged light particles, where $m_e= g_{Ye} v $ seems to us also  a possible scenario. Interestingly,   laboratory limits for these conditions could be attainable in the foreseeable future.
\item
Since $\eta\propto G_F^2\propto 1/v^4$  we would need to find a mechanism that would decrease the vacuum value $v_0\simeq 246$ GeV by factor 1.8 already at temperature $T\simeq m_e$.  Allowing three powers of $v$ to cancel by using the Higgs minimal coupling formula for electron mass  we need to change $v$ by an order of magnitude near to $T\simeq m_e$. This appears impossible.
\end{itemize}
While ideas justifying strong variation of $\eta$ can be developed as two of the above three cases argue, a model for temperature or time dependence of  $\sin^2 \theta_W$ seems at this time without a theoretical anchor point, mainly so since we do not have a valid grand unified theoretical framework in which the electro-weak mixing or equivalently the masses $M_W, M_Z$ would be anchored.

{\bf  To conclude:} The  explanation of $N_\nu>3.05$ in terms of variation of natural constants that we have presented comprises  speculative and beyond the standard model ideas akin, in this aspect, to the alternative explanation based on new dark `radiation' particles. In order to achieve an increase in $N_\nu$ the change in natural constants must cause, through a  delay in neutrino freeze-out, a greater participation of neutrinos in reheating during $e^\pm$ annihilation.  We believe that the study here presented shows a viable  mechanism  capable of influencing $N_\nu$, and thus merits further investigation. In particular, reconciliation with the following BBN epoch  will help to estimate limits on  variation  in the early Universe of the two fundamental parameters controlling  $N_\nu$: $\eta$ (see \req{eq:etaMp}) and $\sin^2\theta_W$, the latter parameter in principle remaining unconstrained by BBN and could freely evolve as long as it reaches the present day measured value.

\subsection*{Acknowledgments}
This work has been supported by a grant from the U.S. Department of Energy, DE-FG02-04ER41318
and was conducted with Government support under and awarded by DoD, Air Force Office of Scientific Research, National Defense Science and Engineering Graduate (NDSEG) Fellowship, 32 CFR 168a.

\appendix
\section{Inducing Volume Forms on Submanifolds}\label{app:delta}
Given a Riemannian manifold $(M,g)$ with volume form $dV_g$ and a  hypersurface $S$, the standard Riemannian hypersurface area form, $dA_g$ is defined on $S$ as the volume form of the pullback metric tensor on $S$.  Equivalently, it can be computed as
\begin{equation}
dA_g=i_v dV_g
\end{equation}
where $v$ is a unit normal vector to $S$ and $i_v$ denotes interior product (i.e. contraction) of the antisymmetric tensor $dV_g$ with the vector $v$.  

We take a moment to describe the properties of the interior product that are relevant for our purposes. The interior product, $i_X\omega$, is linear in both the vector $X$ and the form $\omega$ and for a one form (i.e. dual vector) $\tau$, $i_X\tau$ is just the usual contraction of a vector and dual vector.  On higher degree forms the interior product is characterized by the relation
\begin{equation}
i_X(\omega\wedge\tau)=(i_X\omega)\wedge\tau+(-1)^k\omega\wedge(i_X\tau)
\end{equation}
where $\omega$ is a $k$-form and $\tau$ is an $l$-form.  In particular in a coordinate system $x^i$, contracting the coordinate volume element with a coordinate vector $\partial_{x^i}$ is straightforward
\begin{equation}
i_{\partial_{x^i}}dx^1...dx^n=(-1)^{i-1}dx^1...dx^{i-1}dx^{i+1}...dx^n
\end{equation}
where we omit the wedge product signs. In the following we will only be concerned with the results up to sign (i.e. with the density defined by a volume form).  Iterated contractions with the vectors $X_j$ will be denoted by $i_{(X_1,...,X_m)}$.  As we are only concerned with the result up to sign, the order in which we contract is irrelevant.

The Riemannian method of inducing volume measures extends to submanifolds of codimension greater than one as well as to semi-Riemannian manifolds, as long as the metric restricted to the submanifold is non-degenerate, by contracting with an orthonormal basis for the normal vectors. However, there are many situations where one would like to define a natural volume form on a submanifold that is induced by a volume form in the ambient space, but where the above method is inapplicable, such as defining a natural volume form on the light cone or other more complicated degenerate submanifolds in relativity. In this appendix, we will describe a method for inducing volume forms on regular level sets of a function that is applicable in cases where there is no metric structure and show its relation to more widely used semi-Riemannian case. 

Let $M$, $N$ be smooth manifolds, $c$ be a regular value of a smooth function $F:M\rightarrow N$, and $\Omega^M$ and $\Omega^N$ be volume forms on $M$ and $N$ respectively.  Using this data, we will be able to induce a natural volume form on the level set $F^{-1}(c)$.  The absence of a metric on $M$ is made up for by the additional information that the function $F$ and volume form $\Omega^N$ on $N$ provide. The following theorem makes our definition precises and proves the existence and uniqueness of the induced volume form.

\begin{theorem}\label{induced_vol_form}
Let $M$, $N$ be $m$ (resp. $n$)-dimensional smooth manifolds with volume forms $\Omega^M$ (resp. $\Omega^N$). Let $F:M\rightarrow N$ be smooth and $c$ be a regular value.  Then there is a unique volume form $\omega$  (also denoted $\omega^M$) on $F^{-1}(c)$ such that $\omega_x=i_{(v_1,...,v_n)}\Omega^M_x$ whenever $v_i\in T_xM$ are such that 
\begin{equation}\label{unit_volume}
\Omega^N(F_*v_1,...,F_* v_n)=1.
\end{equation}
We call $\omega$ the {\bf volume form induced by $F:(M,\Omega^M)\rightarrow (N,\Omega^N)$}.
\end{theorem}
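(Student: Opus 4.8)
The plan is to build $\omega$ pointwise, then check that it is smooth and that it is unique. Write $S=F^{-1}(c)$; since $c$ is a regular value, $S$ is an embedded submanifold of dimension $m-n$ with $T_xS=\ker(F_{*,x})$, and $F_{*,x}:T_xM\to T_cN$ is onto for every $x\in S$. Because $\Omega^N$ is a nonzero alternating $n$-form, one can choose $v_1,\dots,v_n\in T_xM$ whose images under $F_{*,x}$ form a basis of $T_cN$ and, after rescaling one of them by a nonzero scalar, satisfy the normalization $\Omega^N(F_*v_1,\dots,F_*v_n)=1$; hence the defining condition is never vacuous. The object $i_{(v_1,\dots,v_n)}\Omega^M_x$ is an $(m-n)$-form on $T_xM$, and I read the stated identity as saying that its pullback along the inclusion $T_xS\hookrightarrow T_xM$ equals $\omega_x$, so what must be shown is that this pullback is independent of the choice of the $v_i$.

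This independence is the crux. Let $v_1',\dots,v_n'$ be a second admissible system. Since the $F_*v_j$ form a basis of $T_cN$, we may write $F_*v_i'=\sum_jA_{ij}F_*v_j$, and comparing the normalizations $\Omega^N(F_*v_1',\dots,F_*v_n')=1=\Omega^N(F_*v_1,\dots,F_*v_n)$ forces $\det A=1$. Consequently $u_i:=v_i'-\sum_jA_{ij}v_j\in\ker(F_{*,x})=T_xS$. Now fix $w_1,\dots,w_{m-n}\in T_xS$ and expand $\Omega^M_x(v_1',\dots,v_n',w_1,\dots,w_{m-n})$ multilinearly via $v_i'=\sum_jA_{ij}v_j+u_i$: any term that keeps at least one $u_i$ evaluates $\Omega^M_x$ on a tuple containing the $m-n+1$ vectors $u_i,w_1,\dots,w_{m-n}$, all lying in the $(m-n)$-dimensional space $T_xS$, hence a linearly dependent tuple on which the top-degree alternating form $\Omega^M_x$ vanishes. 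The only surviving term is $\Omega^M_x\!\big(\sum_jA_{1j}v_j,\dots,\sum_jA_{nj}v_j,w_1,\dots,w_{m-n}\big)$, which by the usual determinant expansion of an alternating form equals $(\det A)\,\Omega^M_x(v_1,\dots,v_n,w_1,\dots,w_{m-n})=\Omega^M_x(v_1,\dots,v_n,w_1,\dots,w_{m-n})$. Thus the pullback to $T_xS$ agrees for both systems, so $\omega_x$ is well defined.

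Nondegeneracy follows by running the construction in reverse: extend any basis $w_1,\dots,w_{m-n}$ of $T_xS$ to a basis $v_1,\dots,v_n,w_1,\dots,w_{m-n}$ of $T_xM$, rescale $v_1$ to meet the normalization, and note that $\omega_x(w_1,\dots,w_{m-n})=\pm\,\Omega^M_x(v_1,\dots,v_n,w_1,\dots,w_{m-n})\ne 0$ because $\Omega^M_x$ is a volume form evaluated on a basis; hence $\omega_x$ is a genuine volume form on $T_xS$. For smoothness, near any $x_0\in S$ choose smooth vector fields $V_1,\dots,V_n$ on a neighborhood in $M$ with $F_*V_1,\dots,F_*V_n$ pointwise independent (possible since $F$ is a submersion there), then replace $V_1$ by $V_1/\Omega^N(F_*V_1,\dots,F_*V_n)$, a smooth rescaling by a nowhere-vanishing function; the smooth $(m-n)$-form $i_{(V_1,\dots,V_n)}\Omega^M$ pulls back to $\omega$ on $S$, so $\omega$ is smooth. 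Uniqueness is immediate, since the defining property determines $\omega_x$ at every point. I expect the only real obstacle to be the multilinear-algebra step in the second paragraph — verifying that the $u_i$-terms drop out for dimensional reasons and that the surviving term carries precisely the factor $\det A$; the rest is routine use of the regular value theorem and local frames.
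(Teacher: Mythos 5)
Your proposal is correct and follows essentially the same route as the paper's proof: both establish well-definedness by writing the transition matrix $A$ between two admissible systems, noting $\det A=1$ from the normalization, observing that the difference vectors lie in $\ker F_{*}=T_xF^{-1}(c)$ so the corresponding terms vanish when the top-degree form is restricted to that subspace, and then obtaining smoothness from a local frame rescaled by the nowhere-vanishing function $\Omega^N(F_*V_1,\dots,F_*V_n)$. Your explicit dimension-count for why the $u_i$-terms drop out and your separate nondegeneracy check are just slightly more detailed renderings of steps the paper asserts more briefly.
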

\begin{proof}
$F_*$ is onto $T_{F(x)}N$ for any $x\in F^{-1}(c)$.  Hence there exists $\{v_i\}_1^n\subset T_xM$ such that 
\begin{equation}
\Omega^N(F_*v_1,...,F_* v_n)=1.
\end{equation}
  In particular, $F_* v_i$ is a basis for $T_{F(x)} N$.  Define $\omega_x=i_{(v_1,...,v_n)}\Omega_x$. This is obviously a nonzero $m-n$ form on $T_xF^{-1}(c)$ for each $x\in F^{-1}(c)$.  We must show that this definition is independent of the choice of $v_i$ and the result is smooth.\\

 Suppose $F_*v_i$ and $F_*w_i$ both satisfy \req{unit_volume}.  Then $F_*v_i=A_i^jF_*w_j$ for $A\in SL(n)$. Therefore $v_i-A_i^jw_j\in \ker F_{*x}$.  This implies
\begin{equation}
i_{(v_1,...,v_n)}\Omega^M_x=\Omega^M_x(A_1^{j_1}w_{j_1},...,A_n^{j_n}w_{j_n},\cdot)
\end{equation}
since the terms involving $\ker F_*$ will vanish on $T_x F^{-1}(c)=\ker F_{*x}$.  Therefore
\begin{align}\label{ind_of_v_proof}
i_{(v_1,...,v_n)}\Omega^M_x&=A_1^{j_1}...A_n^{j_n}\Omega^M_x(w_{j_1},...,w_{j_n},\cdot)\\
&=\sum_{\sigma\in S_m} \pi(\sigma)A_1^{\sigma(1)}...A_n^{\sigma(n)}\Omega^M_x(w_1,...,w_n,\cdot)\\
&=\det(A)i_{(w_1,...,w_n)}\Omega^M_x\\
&=i_{(w_1,...,w_n)}\Omega^M_x.
\end{align}
This proves that $\omega$ is independent of the choice of $v_i$.  If we can show $\omega$ is smooth then we are done.  We will do better than this by proving that for any  $v_i\in T_xM$ the following holds
\begin{equation}
i_{(v_1,...,v_n)}\Omega^M_x=\Omega^N(F_*v_1,...,F_*v_n)\omega_x.
\end{equation}
To see this, take $w_i$ satisfying \req{unit_volume}.  Then $F_*v_i=A_i^j F_*w_j$. This determinant can be computed from
\begin{align}
\Omega^N(F_*v_1,...,F_*v_n)=\det(A)\Omega^N(F_*w_1,...,F_*W_n)=\det(A).
\end{align}
 Therefore, the same computation as \req{ind_of_v_proof} gives
\begin{align}
i_{(v_1,...,v_n)}\Omega^M_x=\det(A)\omega_x=\Omega^N(F_*v_1,...,F_*v_n)\omega_x
\end{align}
as desired.  To prove that $\omega$ is smooth, take a smooth basis of vector fields $\{V_i\}_1^m$ in a neighborhood of $x$.  After relabeling, we can assume $\{F_*V_i\}_1^n$ are linearly independent at $F(x)$ and hence, by continuity, they are linearly independent at $F(y)$ for all $y$ in some neighborhood of $x$.  In that neighborhood, $\Omega^N(F_*V_1,...,F_*V_n)$ is non-vanishing and therefore
\begin{equation}
\omega=(\Omega^N(F_*V_1,...,F_*V_n))^{-1}i_{(V_1,...,V_n)}\Omega
\end{equation} 
which is smooth.
\end{proof}

\begin{corollary}\label{induced_vol_eq}
For any  $v_i\in T_xM$ the following holds
\begin{equation}
i_{(v_1,...,v_n)}\Omega^M_x=\Omega^N(F_*v_1,...,F_*v_n)\omega_x.
\end{equation}
\end{corollary}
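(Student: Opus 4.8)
The identity is precisely the intermediate claim extracted inside the proof of Theorem~\ref{induced_vol_form}, so the plan is simply to repackage that argument cleanly rather than invoke new machinery. First I would fix a point $x\in F^{-1}(c)$ and use surjectivity of $F_{*x}\colon T_xM\to T_{F(x)}N$ to choose $w_1,\dots,w_n\in T_xM$ with $\Omega^N(F_*w_1,\dots,F_*w_n)=1$. By the defining property established in Theorem~\ref{induced_vol_form} this gives $\omega_x=i_{(w_1,\dots,w_n)}\Omega^M_x$, and moreover $\{F_*w_j\}_{j=1}^n$ is a basis of $T_{F(x)}N$.

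Given arbitrary $v_1,\dots,v_n\in T_xM$, I would then write $F_*v_i=A_i^{\,j}F_*w_j$ for a unique $n\times n$ matrix $A$. Applying $\Omega^N$ to both sides and using multilinearity and antisymmetry yields
\begin{equation}
\Omega^N(F_*v_1,\dots,F_*v_n)=\det(A)\,\Omega^N(F_*w_1,\dots,F_*w_n)=\det(A),
\end{equation}
and for each $i$ the vector $k_i\equiv v_i-A_i^{\,j}w_j$ lies in $\ker F_{*x}=T_xF^{-1}(c)$.

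The heart of the argument is to evaluate the $(m-n)$-form $i_{(v_1,\dots,v_n)}\Omega^M_x$ on an arbitrary tuple $u_1,\dots,u_{m-n}\in T_xF^{-1}(c)$. Substituting $v_i=A_i^{\,j}w_j+k_i$ and expanding multilinearly, any resulting term that retains at least one factor $k_i$ feeds $\Omega^M_x$ a list of vectors containing $k_i$ together with $u_1,\dots,u_{m-n}$, hence at least $m-n+1$ vectors all lying in the $(m-n)$-dimensional subspace $\ker F_{*x}$; such a list is linearly dependent, so that term vanishes. Only the term assembled entirely from the $A_i^{\,j}w_j$ survives, and collapsing it with the usual permutation sum over the $w$-indices produces the factor $\det(A)$, so that $i_{(v_1,\dots,v_n)}\Omega^M_x=\det(A)\,i_{(w_1,\dots,w_n)}\Omega^M_x=\det(A)\,\omega_x$. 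Combined with the displayed formula this is exactly the asserted identity. The one genuinely delicate point is the vanishing of the $k_i$-terms: it relies on remembering that $i_{(v_1,\dots,v_n)}\Omega^M_x$ is to be evaluated only on vectors tangent to $F^{-1}(c)$, so that the $u_j$ already exhaust the kernel and any extra kernel vector forces a linear dependence. Everything else is the same multilinear bookkeeping with the interior product already carried out in Theorem~\ref{induced_vol_form}, so one could alternatively just cite that portion of the preceding proof.
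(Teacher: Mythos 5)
Your proposal is correct and follows essentially the same route as the paper: the authors prove this identity inside the proof of Theorem~\ref{induced_vol_form} by choosing normalized $w_i$, writing $F_*v_i=A_i^{\,j}F_*w_j$, computing $\Omega^N(F_*v_1,\dots,F_*v_n)=\det(A)$, and discarding the kernel terms because they vanish on $T_xF^{-1}(c)=\ker F_{*x}$. Your only addition is to spell out explicitly why the kernel terms vanish (a list of $m-n+1$ vectors in an $(m-n)$-dimensional subspace is linearly dependent), which the paper leaves implicit.
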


\begin{corollary}
If $\phi:M\rightarrow\mathbb{R}$ is smooth and $c$ is a regular value then by equipping $\mathbb{R}$ with its canonical volume form we have 
\begin{equation}
\omega_x=i_v\Omega^M_x
\end{equation}
where $v\in T_xM$ is any vector satisfying $d\phi(v)=1$.
\end{corollary}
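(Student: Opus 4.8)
The plan is to obtain this corollary as the special case $N=\mathbb{R}$, $n=1$ of Theorem~\ref{induced_vol_form}. Equip $\mathbb{R}$ with its canonical volume form $\Omega^{\mathbb{R}}=dt$, and take $F=\phi$ together with the given regular value $c$, so that $\phi^{-1}(c)$ is a codimension-one submanifold of $M$. Since $c$ is a regular value, at each $x\in\phi^{-1}(c)$ the differential $d\phi_x$ is nonzero; hence there exists $v\in T_xM$ with $d\phi(v)=1$, which is what makes the statement non-vacuous, and the hypotheses of Theorem~\ref{induced_vol_form} are met.

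Next I would unwind the unit-volume condition \req{unit_volume} in this setting. With $n=1$ there is a single vector $v_1=v$, and the pushforward $\phi_*v\in T_c\mathbb{R}$ corresponds, under the canonical identification $T_c\mathbb{R}\cong\mathbb{R}$ sending $\partial_t\mapsto 1$, to the real number $d\phi(v)$. Thus $\Omega^{\mathbb{R}}(\phi_*v)=dt(\phi_*v)=d\phi(v)$, so the requirement $\Omega^{\mathbb{R}}(\phi_*v)=1$ appearing in \req{unit_volume} is precisely $d\phi(v)=1$. Theorem~\ref{induced_vol_form} then asserts that the induced volume form on $\phi^{-1}(c)$ is $\omega_x=i_{(v)}\Omega^M_x=i_v\Omega^M_x$ for any such $v$, which is the claimed formula. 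Independence of the choice of $v$ is already part of the conclusion of Theorem~\ref{induced_vol_form}; alternatively it follows directly, since if $d\phi(v)=d\phi(w)=1$ then $v-w\in\ker d\phi_x=T_x\phi^{-1}(c)$, so the contraction $i_{v-w}\Omega^M_x$ vanishes on $T_x\phi^{-1}(c)$ and hence $i_v\Omega^M_x=i_w\Omega^M_x$ there, cf. Corollary~\ref{induced_vol_eq}.

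There is essentially no obstacle to overcome here: the corollary is a transcription of Theorem~\ref{induced_vol_form} into the scalar case, and the only point requiring a moment's care is the identification of $\phi_*v$ with $d\phi(v)$ together with the convention that the canonical volume form on $\mathbb{R}$ is the one assigning unit volume to a unit vector, which is exactly what collapses the pushforward normalization in \req{unit_volume} to the single scalar equation $d\phi(v)=1$.
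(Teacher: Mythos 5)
Your proposal is correct and matches the paper's (implicit) reasoning: the corollary is stated without proof precisely because it is the $N=\mathbb{R}$, $n=1$ specialization of Theorem~\ref{induced_vol_form}, with $\Omega^{\mathbb{R}}(\phi_*v)=d\phi(v)$ collapsing the normalization condition \req{unit_volume} to $d\phi(v)=1$. Your additional remark on independence of the choice of $v$ via $\ker d\phi_x=T_x\phi^{-1}(c)$ is a correct, if redundant, cross-check.
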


A coarea formula can be proved for the induced volume forms.
\begin{theorem}[Coarea formula]\label{vol_form_coarea}
Let $M$ be a smooth manifold with volume form $\Omega^M$, $N$ a smooth manifold with volume form $\Omega^N$ and $F:M\rightarrow N$ be a smooth map.  If $F_*$ is surjective at a.e. $x\in M$ then for $f\in L^1(\Omega^M)\bigcup L^+(M)$
\begin{equation}\label{coarea_formula}
\int_Mf(x) \Omega^M(dx)=\int_{N}\int_{F^{-1}(z)} f(y)\omega^M_z(dy) \Omega^N(dz)
\end{equation}
where $\omega^M_z$ is the volume form induced on $F^{-1}(z)$ as in theorem \ref{induced_vol_form}.
\end{theorem}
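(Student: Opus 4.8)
The plan is to reduce \req{coarea_formula} to a local identity via a partition of unity, prove that local identity in coordinates adapted to the submersion normal form of $F$, and conclude with Fubini's theorem; the measure-zero sets of critical points and critical values are disposed of using the hypothesis together with Sard's theorem.

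First I would let $C\subset M$ be the set of points at which $F_*$ is not surjective. By hypothesis $C$ is $\Omega^M$-null, so $\int_Mf\,\Omega^M=\int_{M\setminus C}f\,\Omega^M$, and by Sard's theorem $F(C)$ is $\Omega^N$-null, so for $\Omega^N$-a.e.\ $z\in N$ the fibre $F^{-1}(z)$ lies in $M\setminus C$ and is a regular level set on which $\omega^M_z$ is defined by Theorem \ref{induced_vol_form}; the inner integral over the null set $F(C)$ is irrelevant. Hence it is enough to prove the identity with $M$ replaced by the open submanifold $M\setminus C$, on which $F$ is a submersion. Since both sides of \req{coarea_formula} are linear and monotone in $f$, and the forms $\omega^M_z$ are defined pointwise (depending only on $\Omega^M$, $F$, $\Omega^N$ near the fibre), the identity localizes: writing $\{\rho_\alpha\}$ for a partition of unity on $M\setminus C$ subordinate to a cover $\{V_\alpha\}$ and testing against $f\rho_\alpha$, it suffices to treat $f$ supported in a single chart.

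Next I would apply the submersion normal form: about any point of $M\setminus C$ there are coordinates $(y^1,\dots,y^m)$ on $M$ and $(z^1,\dots,z^n)$ on $N$ in which $F$ is the projection $(y^1,\dots,y^m)\mapsto(y^1,\dots,y^n)$; I may assume this chart is a $V_\alpha$. Write $\Omega^M=a(y)\,dy^1\wedge\dots\wedge dy^m$ and $\Omega^N=b(z)\,dz^1\wedge\dots\wedge dz^n$ with $a,b$ smooth and positive. The fibre $F^{-1}(z)$ is the slice $\{y^1=z^1,\dots,y^n=z^n\}$ with coordinates $(y^{n+1},\dots,y^m)$. Choosing $v_1=b(z)^{-1}\partial_{y^1}$ and $v_j=\partial_{y^j}$ for $2\le j\le n$, and using $F_*\partial_{y^i}=\partial_{z^i}$ for $i\le n$, one has $\Omega^N(F_*v_1,\dots,F_*v_n)=b(z)^{-1}\Omega^N(\partial_{z^1},\dots,\partial_{z^n})=1$, so Theorem \ref{induced_vol_form} gives, up to sign,
\[
\omega^M_z=i_{(v_1,\dots,v_n)}\Omega^M=\frac{a(z^1,\dots,z^n,y^{n+1},\dots,y^m)}{b(z)}\,dy^{n+1}\wedge\dots\wedge dy^m .
\]

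Finally I would compute directly. For $f$ supported in the chart, $\int_Mf\,\Omega^M=\int f(y)\,a(y)\,dy^1\cdots dy^m$ as a Lebesgue integral, and Fubini's theorem (integrating first over $(y^{n+1},\dots,y^m)$ at fixed $z=(y^1,\dots,y^n)$, then over $z$) yields
\[
\int_Mf\,\Omega^M=\int\!\left(\int f(z,\vec y)\,\frac{a(z,\vec y)}{b(z)}\,d\vec y\right)b(z)\,dz=\int_N\!\left(\int_{F^{-1}(z)}f\,\omega^M_z\right)\Omega^N(dz),
\]
using the displayed formula for $\omega^M_z$ and $\Omega^N(dz)=b(z)\,dz^1\cdots dz^n$. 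Summing over the partition of unity and recalling that $C$ and $F(C)$ are null gives \req{coarea_formula}. The step I expect to demand the most care is the measure-theoretic bookkeeping in the reduction: verifying that removing $C$ leaves the left side unchanged (the hypothesis), that removing $F(C)$ leaves the right side unchanged (Sard), and that for a.e.\ $z$ the fibre is genuinely a regular level set so that $\omega^M_z$ is legitimately defined; once localized, the geometric content is only the elementary interior-product computation plus Fubini.
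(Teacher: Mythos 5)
Your proof is correct, and it is worth noting that the paper itself offers \emph{no} proof of this theorem: it states only that ``a coarea formula can be proved for the induced volume forms'' and moves on, so your argument fills a genuine gap rather than paralleling an existing one. The route you take is the natural one and is fully consistent with what the paper does provide: your reduction to the measure-zero complement of the critical set (hypothesis for the left side, Sard for the right), localization by a partition of unity, and the submersion normal form leading to $\omega^M_z=\frac{a(y)}{b(z)}\,dy^{n+1}\wedge\dots\wedge dy^m$ reproduce exactly the coordinate expression the paper derives separately in \req{vol_form_coords} and uses in \req{delta_integral_coords} (there the Jacobian factor $\det(\partial F^i/\partial x^j)$ appears because the paper does not normalize $F$ to a projection; in your adapted chart it is $1$). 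Fubini then closes the argument. The only points deserving a word of care, which you already flag, are that $M\setminus C$ is open (surjectivity of $F_*$ is an open condition, so the localization makes sense), that Sard applies since $F$ is $C^\infty$, and that signs/orientations are immaterial because, as the paper notes, all induced forms are used only as densities. No gaps.
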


 The induced measure defined above allows for a coordinate independent definition of a delta function supported on a regular level set.  Such an object is of great use in performing calculations in relativistic phase space in a coordinate independent manner. 
\begin{definition}
Motivated by the coarea formula, we define the composition of the {\bf Dirac delta function} supported on $c\in N$ with a smooth map $F:M\rightarrow N$ such that $c$ is a regular value of $F$ by
\begin{equation}\label{delta_def}
 \delta_c(F(x))\Omega^M \equiv \omega^M
\end{equation}
on $F^{-1}(c)$.    For $f\in L^1(\omega^M)$ we will write 
\begin{equation}
\int_M f(x)\delta_c(F(x))\Omega^M(dx)
\end{equation} 
in place of 
\begin{equation}
\int_{F^{-1}(c)} f(x) \omega^M(dx).
\end{equation}
\end{definition}

It is useful to translate the induced volume element into a form that is more readily applicable to computations in coordinates.  Choose arbitrary coordinates $y^i$ on $N$ and write $\Omega^N=h^N(y) dy^n$. Choose coordinates $x^i$ on $M$ such that $F^{-1}(c)$ is the coordinate slice
\begin{equation}
F^{-1}(c)=\{x:x^1=...=x^n=0\}
\end{equation}
and write $\Omega^M=h^M(x)dx^m$. The coordinate vector fields $\partial_{x^i}$ are transverse to $F^{-1}(c)$ and so
\begin{equation}
\Omega^N(F_*\partial_{x^1},...,F_*\partial_{x^n})=h^N(F(x))\det \left(\frac{\partial F^i}{\partial x^j}\right)_{i,j=1..n}
\end{equation}
and
\begin{equation}
i_{(\partial_{x^1},...,\partial_{x^n})}\Omega^M=h^M(x) dx^{n+1}...dx^m.
\end{equation}
Therefore we obtain
\begin{equation}\label{vol_form_coords}
\omega_x=\frac{h^M(x)}{h^N(F(x))}\det \left(\frac{\partial F^i}{\partial x^j}\right)^{-1}_{i,j=1..n}dx^{n+1}...dx^m.
\end{equation}

Using \req{vol_form_coords}, along with the coordinates described there, we can (at least locally) write the integral with respect to the delta function in the more readily usable form
\begin{equation}\label{delta_integral_coords}
\int_M f(x)\delta_c(F(x))\Omega^M=\int_{F^{-1}(c)} f(x)\frac{h^M(x)}{h^N(F(x))}\bigg|\det \left(\frac{\partial F^i}{\partial x^j}\right)^{-1}\bigg|dx^{n+1}...dx^m.
\end{equation}
The absolute value comes from the fact that we use $\delta_c(F(x))\Omega^M$ to define the orientation on $F^{-1}(c)$.

\section{Electron and Neutrino Collision Integrals}\label{app:nu_matrix_elements}
\subsection{$\nu\nu\rightarrow\nu\nu$ }
Using \req{Mandelstam}, the matrix elements for neutrino neutrino scattering can be simplified to
\begin{align}
\label{TA002}
S|\mathcal{M}|^2=C(p_1\cdot p_2)(p_3\cdot p_4)=C\frac{s^2}{4}
\end{align}
 where the coefficient $C$ is given in table \ref{table:nu_nu_coeff}.

\begin{table}[h]
\centering 
\begin{tabular}{|c|c|}
\hline
Process &$C$ \\
\hline
$\nu_i+\nu_i\rightarrow\nu_i+\nu_i,\hspace{2mm} i\in\{e,\mu,\tau\}$& $64 G_F^2$\\
\hline
$\nu_i+\nu_j\rightarrow\nu_i+\nu_j,\hspace{2mm} i\neq j, \hspace{1mm} i,j\in\{e,\mu,\tau\}$& $32 G_F^2$\\
\hline
\end{tabular}
\caption{Matrix element coefficients for neutrino neutrino scattering processes.}
\label{table:nu_nu_coeff}
\end{table}
From here we obtain
{\small
\begin{align}
M_{\nu\nu\rightarrow\nu\nu}=&\frac{C}{256(2\pi)^5 }\int_{s_0}^\infty\!\!\!\! s^2\!\!\int_0^\infty \!\!\!\!\int_{-1}^1\! G_{12}(p^0,-pz)dz \int_{-1}^1\!G_{34}\left(p^0,-py\right) dy\frac{ p^2}{p^0}dpds.
\end{align}
}
Therefore, as we claimed above, $M_{\nu\nu\rightarrow\nu\nu}$ can be written in a form that requires the numerical evaluation of only three iterated integrals, but not quite as a three dimensional integral. If we want to emphasize the role of $C$ then we write $M_{\nu\nu\rightarrow\nu\nu}(C)$.  Note that if one scales $p$ and $s$ by the appropriate powers of $T$ in order to convert to dimensionless variables, one obtains a prefactor of $T^8$.

\subsection{$\nu\bar\nu\rightarrow \nu\bar\nu$ }
Using \req{Mandelstam}, the matrix elements for neutrino anti-neutrino scattering can be simplified to
\begin{align}
S|\mathcal{M}|^2=C\left(\frac{s+t}{2}\right)^2
\end{align}
where the coefficient $C$ is given in table \ref{table:nu_nubar_coeff}.

\begin{table}[h]
\centering 
\begin{tabular}{|c|c|}
\hline
Process &$C$  \\
\hline
$\nu_i+\bar\nu_i\rightarrow\nu_i+\bar\nu_i,\hspace{2mm} i\in\{e,\mu,\tau\}$& $128 G_F^2$\\
\hline
$\nu_i+\bar\nu_i\rightarrow\nu_j+\bar\nu_j,\hspace{2mm} i\neq j, \hspace{1mm} i,j\in\{e,\mu,\tau\}$& $32 G_F^2$\\
\hline
$\nu_i+\bar\nu_j\rightarrow\nu_i+\bar\nu_j,\hspace{2mm} i\neq j, \hspace{1mm} i,j\in\{e,\mu,\tau\}$& $32 G_F^2$\\
\hline
\end{tabular}
\caption{Matrix element coefficients for neutrino neutrino scattering processes.}
\label{table:nu_nubar_coeff}
\end{table}
Using this we find
 \begin{align}
\int_0^{2\pi} S |\mathcal{M}|^2 (s,t(\cos(\psi)\sqrt{1-y^2}\sqrt{1-z^2}+yz))d\psi=&\frac{\pi C}{16} s^2(3+4 yz-y^2-z^2+3y^2z^2)\notag\\
\equiv&\frac{\pi C}{16} s^2q(y,z),
\end{align}

\begin{align}
M_{\nu\bar\nu\rightarrow\nu\bar\nu}=&\frac{C}{2048(2\pi)^5 }T^8\!\!\!\int_0^\infty\!\!\!\int_0^\infty\!\!\!\! {s}^2\left[\int_{-1}^1\int_{-1}^1q(y,z){G}_{34}(p^0,-{p} y) {G}_{12}(p^0,-{p} z)dydz\right]\!\frac{{p}^2}{{p}^0}d{p}d{s}.
\end{align}
 Again, by converting to dimensionless variables we see that this scales with $T^8$. If we want to emphasize the role of $C$ then we write $M_{\nu\bar\nu\rightarrow\nu\bar\nu}(C)$. Note that due to the polynomial form of the matrix element integral, the double integral in brackets breaks into a linear combination of products of one dimensional integrals, meaning that the nesting of integrals is only three deep.

\subsection{$\nu\bar{\nu}\rightarrow e^+e^-$}\label{nu_nubar_int}
Using \req{Mandelstam}, the matrix elements for neutrino anti-neutrino annihilation into $e^\pm$ can be simplified to
\begin{align}
S|\mathcal{M}|^2=A\left(\frac{s+t-m_e^2}{2}\right)^2+B\left(\frac{m_e^2-t}{2}\right)^2+Cm_e^2\frac{s}{2}
\end{align}
where the coefficients $A,B,C$ are given in table \ref{table:nu_nubar_ee_coeff}.

\begin{table}[h]
\centering 
\begin{tabular}{|c|c|c|c|}
\hline
Process &$A$&$B$&$C$  \\
\hline
$\nu_e+\bar\nu_e\rightarrow e^++e^-$&$128G_F^2g_L^2$&$128G_F^2g_R^2$&$128G_F^2g_Lg_R$\\
\hline
$\nu_i+\bar\nu_i\rightarrow e^++e^-,\hspace{2mm} i\in\{\mu,\tau\}$&$128G_F^2\tilde g_L^2$&$128G_F^2g_R^2$&$128G_F^2\tilde g_Lg_R$\\
\hline
\end{tabular}
\caption{Matrix element coefficients for neutrino neutrino annihilation into $e^\pm$.}
\label{table:nu_nubar_ee_coeff}
\end{table}

  The integral of each of these terms is
 \begin{align}
&\int_0^{2\pi}\frac{(s+t(\psi)-m_e^2)^2}{4}d\psi=\frac{\pi}{16}s(3s-4m_e^2)+\frac{\pi}{4}s^{3/2}\sqrt{s-4m_e^2}yz\\
&-\frac{\pi}{16}s(s-4m_e^2)(y^2+z^2)+\frac{3\pi}{16}s(s-4m_e^2)y^2z^2,\notag\\
&\int_0^{2\pi} \frac{(m_e^2-t(\psi))^2}{4}d\psi=\frac{\pi}{16}s(3s-4m_e^2)-\frac{\pi}{4}s^{3/2}\sqrt{s-4m_e^2}yz\\
&-\frac{\pi}{16}s(s-4m_e^2)(y^2+z^2)+\frac{3\pi}{16}s(s-4m_e^2)y^2z^2,\\
&\int_0^{2\pi} m_e^2\frac{s}{2} d\psi=\pi m_e^2s.
\end{align}
Therefore 
\small
\begin{align}
\int_0^{2\pi} S |\mathcal{M}|^2 (s,t(\psi))d\psi=&\frac{\pi}{16}s\left[3s(A+B)+4m_e^2(4C-A-B)\right]+\frac{\pi}{4}s^{3/2}\sqrt{s-4m_e^2}(A-B)yz\notag\\
&-\frac{\pi}{16}s(s-4m_e^2)(A+B)(y^2+z^2)+\frac{3\pi}{16}s(s-4m_e^2)(A+B)y^2z^2\notag\\
\equiv& \pi q(m_e,s,y,z).
\end{align}
\begin{align}
M_{\nu\bar\nu\rightarrow e^+e^-}=&\frac{1}{128(2\pi)^5 }\int_{4m_e^2}^\infty\int_0^\infty\!\!\!\sqrt{1-4m_e^2/s}\left[\int_{-1}^1\int_{-1}^1q(m_e,s,y,z)G_{34}(p^0,-(\sqrt{1-4m_e^2/s})p y)\right.\notag\\
&\hspace{68mm}\times G_{12}(p^0,-p z)dydz\bigg]\frac{ p^2}{p^0}dpds,
\end{align}
\normalsize
By scaling $s$, $p$, and $m_e$ by the appropriate powers of $T$ we again obtain a prefactor of $T^8$.  If we want to emphasize the role of $A,B,C$ then we write $M_{\nu\bar\nu\rightarrow e^+e^-}(A,B,C)$.  Note that this expression is linear in $(A,B,C)\in\mathbb{R}^3$. Also note that, under the assumptions that the distributions of $e^+$ and $e^-$ are the same (i.e. ignoring the small matter anti-matter asymmetry), the $G_{ij}$ terms that contain the product of $e^\pm$ distributions are even functions. Hence the term involving the integral of $yz$ vanishes by antisymmetry.

\subsection{ $\nu e^\pm\rightarrow \nu e^\pm$}
 Using \req{Mandelstam}, the matrix elements for neutrino $e^\pm$ scattering can be simplified to
\begin{align}
\label{TA002_1}
S|\mathcal{M}|^2=A\left(\frac{s-m_e^2}{2}\right)^2+B\left(\frac{s+t-m_e^2}{2}\right)^2+Cm_e^2\frac{t}{2}
\end{align}
 where the coefficients $A,B,C$ are given in table \ref{table:nu_e_coeff}.

\begin{table}[h]
\centering 
\begin{tabular}{|c|c|c|c|}
\hline
Process &$A$&$B$&$C$  \\
\hline
$\nu_e+e^-\rightarrow \nu_e+e^-$&$128G_F^2g_L^2$&$128G_F^2g_R^2$&$128G_F^2g_Lg_R$\\
\hline
$\nu_i+e^-\rightarrow \nu_i+e^-,\hspace{2mm} i\in\{\mu,\tau\}$&$128G_F^2\tilde g_L^2$&$128G_F^2g_R^2$&$128G_F^2\tilde g_Lg_R$\\
\hline
$\nu_e+e^+\rightarrow \nu_e+e^+$&$128G_F^2g_R^2$&$128G_F^2g_L^2$&$128G_F^2g_Lg_R$\\
\hline
$\nu_i+e^+\rightarrow \nu_i+e^+,\hspace{2mm} i\in\{\mu,\tau\}$&$128G_F^2 g_R^2$&$128G_F^2\tilde g_L^2$&$128G_F^2\tilde g_Lg_R$\\
\hline
\end{tabular}
\caption{Matrix element coefficients for neutrino $e^\pm$ scattering.}
\label{table:nu_e_coeff}
\end{table}

  The integral of each of these terms is
 \begin{align}
&\int_0^{2\pi}\frac{(s-m_e^2)^2}{4} d\psi=\pi\frac{(s-m_e^2)^2}{2},\\
&\int_0^{2\pi}\frac{(s+t(\psi)-m_e^2)^2}{4}d\psi=\frac{\pi}{16s^2}(s-m_e^2)^2(3m_e^4+2m_e^2s+3s^2)+\frac{\pi}{4s^2}(s-m_e^2)^3(s+m_e^2)yz,\notag\\
&-\frac{\pi}{16s^2}(s-m_e^2)^4(y^2+z^2)+\frac{3\pi}{16s^2}(s-m_e^2)^4y^2z^2,\\
&\int_0^{2\pi} m_e^2\frac{t(\psi)}{2}d\psi=-\frac{\pi}{2s}m_e^2(s-m_e^2)^2(1-yz).
\end{align}
Therefore we have
\begin{align}
\int_0^{2\pi} S |\mathcal{M}|^2 (s,t(\psi))d\psi=&\pi\left[\frac{A}{2}+\frac{B}{16s^2}(3m_e^4+2m_e^2s+3s^2)-\frac{C}{2s}m_e^2\right](s-m_e^2)^2\notag\\
&+\pi\left[\frac{B}{4s^2}(s-m_e^2)(s+m_e^2)+\frac{C}{2s}m_e^2\right](s-m_e^2)^2yz\notag\\
&-B\frac{\pi}{16s^2}(s-m_e^2)^4(y^2+z^2)+B\frac{3\pi}{16s^2}(s-m_e^2)^4y^2z^2\notag\\
\equiv& \pi q(m_e,s,y,z)
\end{align}
and
\begin{align}\label{matrix_elem_int}
r=\rp =&\frac{s-m_e^2}{\sqrt{s}},\hspace{2mm} q^0=(\qp )^0=-\frac{m_e^2}{\sqrt{s}},\hspace{2mm} \delta=\frac{p}{\sqrt{s}},\hspace{2mm}  \alpha=\frac{p^0}{\sqrt{s}}.
\end{align}

\begin{align}
M_{\nu e\rightarrow\nu e}=&\frac{1}{128(2\pi)^5 }\int_{m_e^2}^\infty\!\int_0^\infty (1-m_e^2/s)^2\left(\int_{-1}^1 \int_{-1}^1 q(m_e,s,y,z) G_{34}\left(p^0,(\qp )^0\alpha-\rp \delta y\right)\right.\notag\\
&\hspace{60mm}\times  G_{12}(p^0,q^0\alpha-r\delta z)dydz\bigg)\frac{ p^2}{p^0}dpds.
\end{align}
As above, after scaling $s$, $p$, and $m_e$ by the appropriate powers of $T$  we obtain a prefactor of $T^8$.  If we want to emphasize the role of $A,B,C$ then we write $M_{\nu e\rightarrow\nu e}(A,B,C)$.  Note that this expression is also linear in $(A,B,C)\in\mathbb{R}^3$.
\subsection{Total Collision Integral}
We now give the total collision integrals for neutrinos.    In the following, we indicate which distributions are used in each of the four types of scattering integrals discussed above by using the appropriate subscripts. For example, to compute $M_{\nu_e\bar\nu_\mu\rightarrow\nu_e\bar\nu_\mu}$  we set $G_{1,2}=\hat\psi_jf^1f^2$, $G_{3,4}=f_3f_4$, $f_1=\hat\psi_j f_{\nu_e}$, $f_3=f_{\nu_e}$, and $f_2=f_4=f_{\bar\nu_\mu}$ in the expression for $M_{\nu\bar\nu\rightarrow\nu\bar\nu}$ from section \ref{nu_nubar_int} and then, to include the reverse direction of the process, we must {\emph subtract}  the analogous expression whose only difference is $G_{1,2}=\hat\psi_j f_1f_2$, $G_{3,4}=f^3f^4$.
With this notation the collision integral for $\nu_e$ is
\begin{align}\label{M_tot}
M_{\nu_e}=&[M_{\nu_e\nu_e\rightarrow\nu_e\nu_e}+M_{\nu_e\nu_\mu\rightarrow\nu_e\nu_\mu}+M_{\nu_e\nu_\tau\rightarrow\nu_e\nu_\tau}]\\
&+[M_{\nu_e\bar\nu_e\rightarrow\nu_e\bar\nu_e}+M_{\nu_e\bar\nu_e\rightarrow\nu_\mu\bar\nu_\mu}+M_{\nu_e\bar\nu_e\rightarrow\nu_\tau\bar\nu_\tau}+M_{\nu_e\bar\nu_\mu\rightarrow\nu_e\bar\nu_\mu}+M_{\nu_e\bar\nu_\tau\rightarrow\nu_e\bar\nu_\tau}]\notag\\
&+M_{\nu_e\bar\nu_e\rightarrow e^+e^-}+[M_{\nu_e e^-\rightarrow\nu_e e^-}+M_{\nu_e e^+\rightarrow\nu_e e^+}]\notag.
\end{align}

Symmetry among the interactions implies that the distributions of $\nu_\mu$ and $\nu_\tau$ are equal.  We also neglect the extremely small matter anti-matter asymmetry and so we take the distribution of each particle to be equal to that of the corresponding antiparticle.  Therefore there are only three independent distributions, $f_{\nu_e}$, $f_{\nu_\mu}$, and $f_e$ and so we can combine some of the terms in \req{M_tot} to obtain
\begin{align}
M_{\nu_e}=&M_{\nu_e\nu_e\rightarrow\nu_e\nu_e}(64G_F^2)+M_{\nu_e\nu_\mu\rightarrow\nu_e\nu_\mu}(2\times 32 G_F^2)+M_{\nu_e\bar\nu_e\rightarrow\nu_e\bar\nu_e}(128G_F^2)  \\
&+M_{\nu_e\bar\nu_e\rightarrow\nu_\mu\bar\nu_\mu}(2\times 32G_F^2)+M_{\nu_e\bar\nu_\mu\rightarrow\nu_e\bar\nu_\mu}(2\times 32 G_F^2)\notag\\
&+M_{\nu_e\bar\nu_e\rightarrow e^+e^-}(128G_F^2g_L^2,128G_F^2g_R^2,128G_F^2g_Lg_R)\notag\\
&+M_{\nu_e e\rightarrow\nu_e e}(128 G_F^2( g_L^2+g_R^2),128 G_F^2 (g_L^2+ g_R^2),256G_F^2g_Lg_R)\notag.
\end{align}
Introducing one more piece of notation, we use a subscript $k$ to denote the orthogonal polynomial basis element that multiplies $f_1$ or $f^1$ in the inner product.  The inner product of the $k$th basis element with the total scattering operator for electron neutrinos is therefore 
\begin{align}
R_k=&2\pi^2T^{-3} M_{k,\nu_e}.
\end{align}
Under these same assumptions and conventions, the total collision integral for the combined $\nu_\mu$, $\nu_\tau$ distribution (which we label $\nu_\mu$) is
\begin{align}
M_{\nu_\mu}=&M_{\nu_\mu\nu_\mu\rightarrow\nu_\mu\nu_\mu}(64G_F^2+32G_F^2)+M_{\nu_\mu\nu_e\rightarrow\nu_\mu\nu_e}(32 G_F^2)\notag\\
&+M_{\nu_\mu\bar\nu_\mu\rightarrow\nu_\mu\bar\nu_\mu}(128G_F^2+32G_F^2+32G_F^2) \notag \\
&+M_{\nu_\mu\bar\nu_\mu\rightarrow\nu_e\bar\nu_e}(32G_F^2)+M_{\nu_\mu\bar\nu_e\rightarrow\nu_\mu\bar\nu_e}( 32 G_F^2)\notag\\
&+M_{\nu_\mu\bar\nu_\mu\rightarrow e^+e^-}(128G_F^2\tilde g_L^2,128G_F^2g_R^2,128G_F^2\tilde g_Lg_R)\notag\\
&+M_{\nu_\mu e\rightarrow\nu_\mu e}(128 G_F^2(\tilde  g_L^2+g_R^2),128 G_F^2 (\tilde g_L^2+ g_R^2),256G_F^2\tilde g_Lg_R),\\
R_k=&2\pi^2T^{-3} M_{k,\nu_\mu}.
\end{align}


\subsection{Conservation Laws and Scattering Integrals}
For some processes, some of the $R_k$'s vanish exactly.  As we now show, this is an expression of various conservation laws. First consider processes in which $f_1=f_3$ and $f_2=f_4$, such as $e^\pm \nu\rightarrow e^\pm\nu$. Since $m_1=m_3$ and $m_2=m_4$ we have $r=\rp $, $q^0=(\qp )^0$.  The scattering terms are all two dimensional integrals of some function of $s$ and $p$ multiplied by 
\small
\begin{align}
I_k\equiv&\int_{-1}^1 \left[\int_{-1}^1\left(\int_0^{2\pi}S|\mathcal{M}|^2 (s,t(\cos(\psi)\sqrt{1-y^2}\sqrt{1-z^2}+yz))d\psi\right) f_1(h_1(y))f_2(h_2(y)) dy\right] \notag\\
&\hspace{26mm}\times f_k^1(h_1(z))f^2(h_2(z))dz\\
&-\int_{-1}^1 \left[\int_{-1}^1\left(\int_0^{2\pi}S|\mathcal{M}|^2 (s,t(\cos(\psi)\sqrt{1-y^2}\sqrt{1-z^2}+yz))d\psi\right) f^1(h_1(y))f^2(h_2(y)) dy\right] \notag\\
&\hspace{26mm}\times f_{1,k}(h_1(z))f_2(h_2(z))dz\\
h_1(y)=&(p^0+(\qp )^0\alpha-\rp \delta y)/2,\hspace{2mm} h_2(y)=(p^0-q^0\alpha+r\delta y)/2, \hspace{2mm} f_{1,k}=\hat\psi_k f_1,\hspace{2mm} f^1_k=\hat\psi_k f^1.
\end{align}
\normalsize
For $k=0$, $\hat\psi_0$ is constant.  After factoring it out of $I_k$, the result is obviously zero and so $R_0=0$.  

We further specialize to a distribution scattering from itself i.e. $f_1=f_2=f_3=f_4$.  Since $m_1=m_2$ and $m_3=m_4$ we have $q^0=(\qp )^0=0$ and
\begin{equation}
h_1(y)=(p^0-\rp \delta y)/2,\hspace{2mm} h_2(y)=(p^0+r\delta y)/2.
\end{equation}
 By the above, we know that $R_0=0$.  $\hat\psi_1$ appears in $I_1$ in the form $\hat\psi_1(h_1(z))$, a degree one polynomial in $z$.  Therefore $R_1$ is a sum of two terms, one which comes from the degree zero part and one from the degree one part.  The former is zero, again by the above reasoning.  Therefore, to show that $R_1=0$ we need only show $I_1=0$, except with $\hat\psi_1(h_1(z))$ replaced by  $z$.  Since $h_1(-y)=h_2(y)$, changing variables  $y\rightarrow -y$ and $z\rightarrow -z$ in the following shows that this term is equal to its own negative, and hence is zero
\begin{align}
&\int_{-1}^1 \left[\int_{-1}^1\left(\int_0^{2\pi}S|\mathcal{M}|^2 (s,t(\cos(\psi)\sqrt{1-y^2}\sqrt{1-z^2}+yz))d\psi\right) f_1(h_1(y))f_1(h_2(y)) dy\right] \notag\\
&\hspace{26mm}\times  zf^1(h_1(z))f^1(h_2(z))dz\\
&-\int_{-1}^1 \left[\int_{-1}^1\left(\int_0^{2\pi}S|\mathcal{M}|^2 (s,t(\cos(\psi)\sqrt{1-y^2}\sqrt{1-z^2}+yz))d\psi\right) f^1(h_1(y))f^1(h_2(y)) dy\right] \notag\\
&\hspace{26mm}\times zf_{1}(h_1(z))f_1(h_2(z))dz.
\end{align}
We note that the corresponding scattering integrals do not vanish for the chemical equilibrium spectral method employed in \cite{Esposito2000,Mangano2002}.  This is another advantage of the method outlined in section \ref{sec:boltz_solve}.  Further differences are discussed  in section \ref{sec:math_tool}.

Finally, we point out how the vanishing of these inner products is a reflection of certain conservation laws. From \req{n_div}, \req{collision_integrals}, and the fact that $\hat\psi_0,\hat\psi_1$ span the space of polynomials of degree $\leq 1$, we have the following expressions for the change in number density and energy density of a massless particle
\begin{align}
\frac{1}{a^3} \frac{d}{dt}(a^3n)=&\frac{g_p}{2\pi^2}\int \frac{1}{E}C[f]p^2dp=c_0 R_0,\\
\frac{1}{a^4}\frac{d}{dt}(a^4\rho)=&\frac{g_p}{2\pi^2}\int C[f] p^2dp=d_0R_0+d_1R_1
\end{align}
for some $c_0,d_0,d_1$. Therefore, the vanishing of $R_0$ is equivalent to conservation of comoving particle number.  The vanishing of $R_0$ and $R_1$ implies $\rho\propto 1/a^4$ i.e. that the reduction in energy density is due entirely to redshift; energy is not lost from the distribution due to scattering.  These findings match the situations above where we found one or both of $R_0=0$, $R_1=0$.  $R_0$ vanishes for scattering processes that exchange momentum but don't change particle number.  Both $R_0$ and $R_1$ vanished for a distribution scattering from itself and in such a process one expects that no energy is lost from the distribution by scattering, it is only redistributed among the particles corresponding to that distribution.

\section{Temperature ratios and fugacities}\label{app:Tups}
We complement the results presented in section \ref{ssec:nnudep} with  photon to neutrino temperature ratios $ T_\gamma / T_{\nu_e}, T_\gamma / T_{\nu_\mu}= T_\gamma / T_{\nu_\tau} $, and the neutrino fugacities, $\Upsilon_{\nu_e}, \Upsilon_{\nu_\mu}=\Upsilon_{\nu_\tau}$, both results are shown in figures \ref{fig:dist1} and \ref{fig:dist2}, varying only one of the two  parameters. 

\begin{figure}
\centerline{\includegraphics[width=0.50\columnwidth]{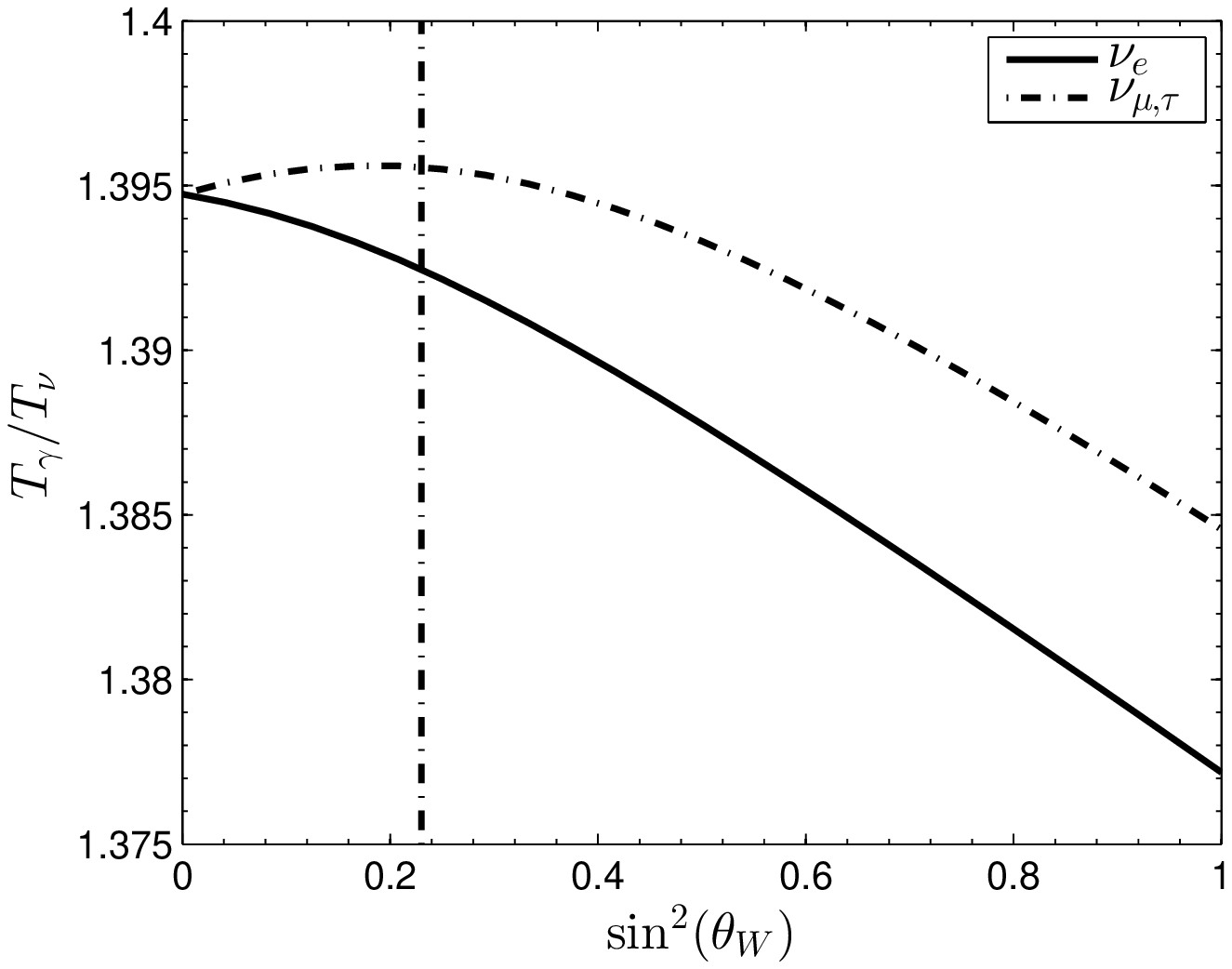}
\hspace{-5mm}\includegraphics[width=0.50\columnwidth]{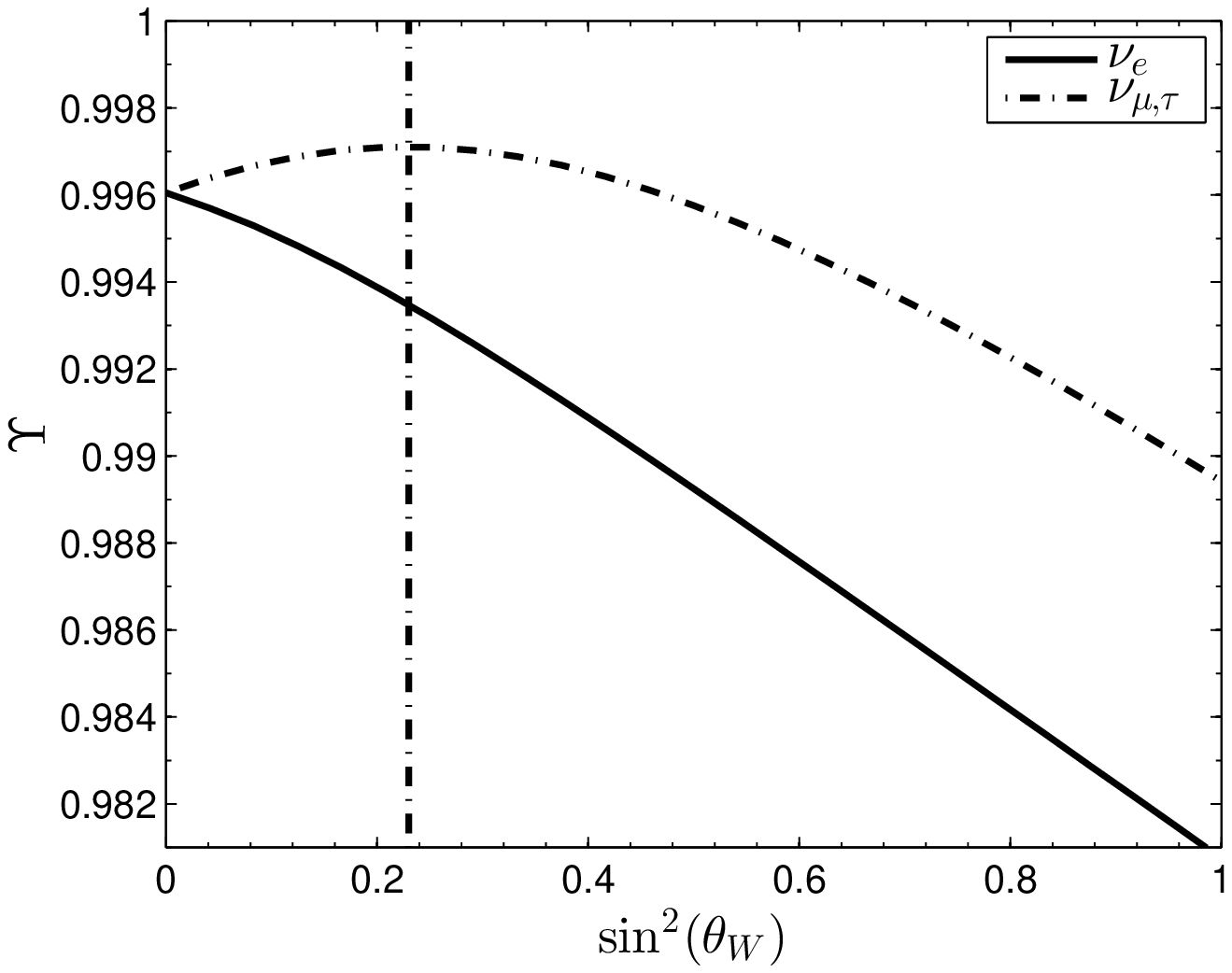}}
\caption{ Photon-neutrino  $\nu_\mu,\nu_\tau$ temperature ratios (left) and neutrino  fugacities (right), as functions of Weinberg angle for $\eta=\eta_0$. }\label{fig:dist1}
\end{figure}
\begin{figure}
\centerline{\includegraphics[width=0.50\columnwidth]{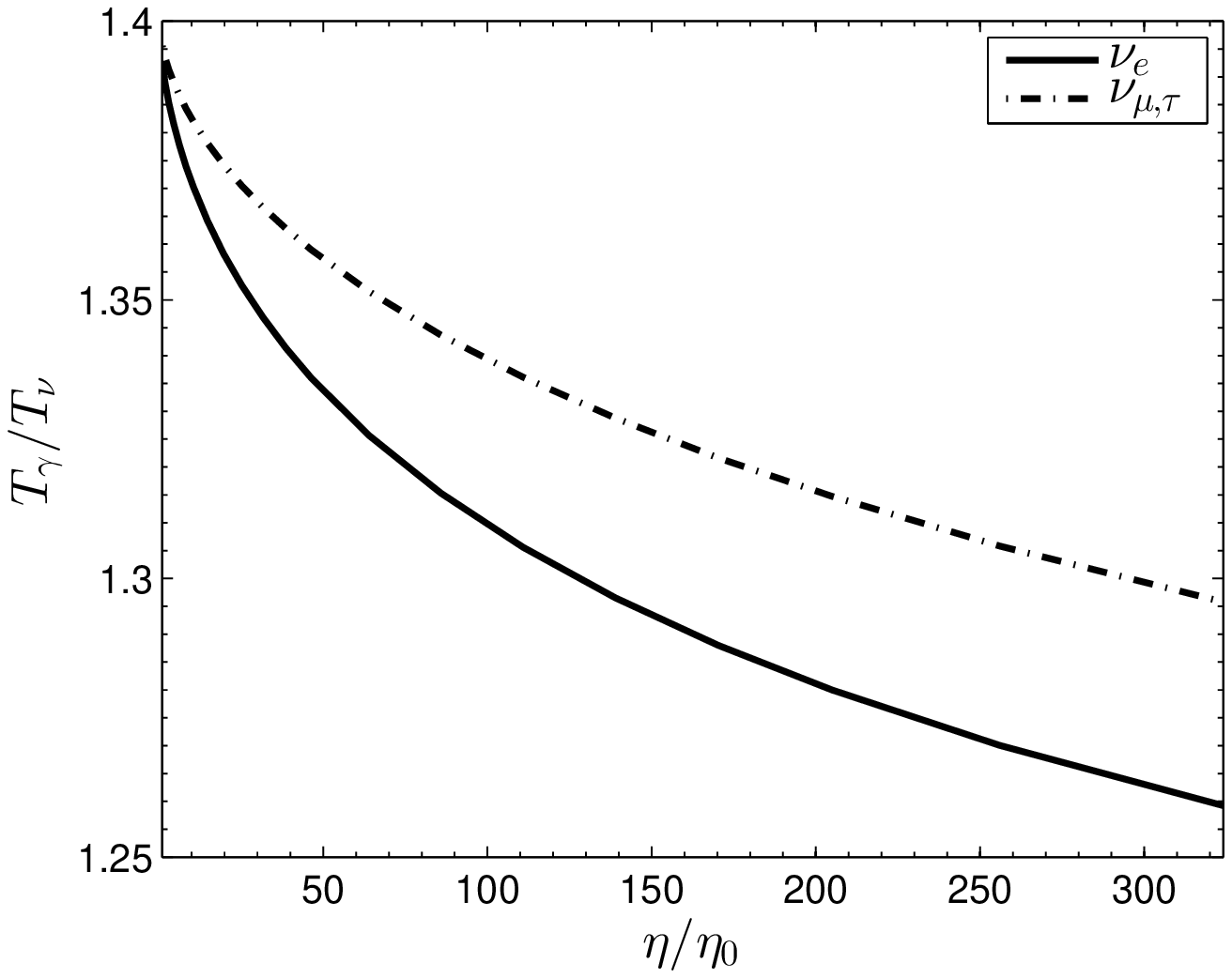}
\hspace{-5mm}\includegraphics[width=0.50\columnwidth]{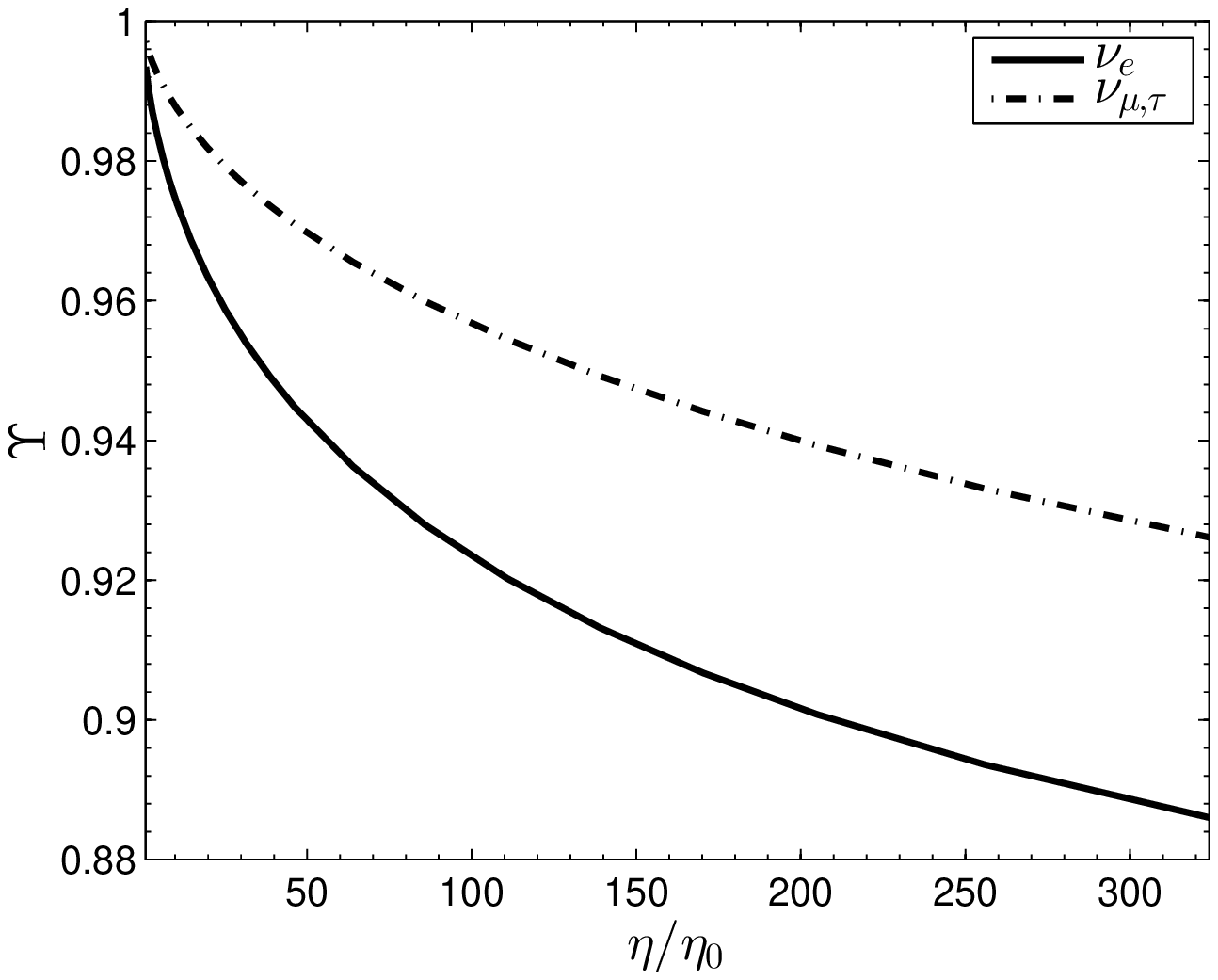}}
\caption{ Photon-neutrino  $\nu_\mu,\nu_\tau$ temperature ratios (left) and neutrino  fugacities (right), as functions of relative interaction strength $\eta/\eta_0$ for $\sin^2\theta_W=0.23$. Vertical line is $\sin^2\theta_W=0.23$.}
\label{fig:dist2}
 \end{figure}

We further show least squares fits to all these quantities for the range $0\leq \sin^2\theta_W\leq 1$, $1\leq \eta/\eta_0\leq 10$ with relative error less than $0.2\%$  
\begin{align}
\frac{T_\gamma}{T_{\nu_\mu}}=&1.401+0.015x-0.040x^2+0.029x^3-0.0065y+0.0040xy-0.017x^2y, \label{fit1}\\
\Upsilon_{\nu_e}=&1.001+0.011x-0.024x^2+0.013x^3-0.005y-0.016xy+0.0006x^2y,\label{fit2}\\ 
\frac{T_\gamma}{T_{\nu_e}}=&1.401+0.015x-0.034x^2+0.021x^3-0.0066y-0.015xy-0.0045x^2y,\label{fit3}\\
\Upsilon_{\nu_\mu}=&1.001+0.011x-0.032x^2+0.023x^3-0.0052y+0.0057xy-0.014x^2y.\label{fit4}
\end{align}
where
\begin{equation}
x\equiv \sin^2 \theta_W ,\qquad
y\equiv  \sqrt{\frac{\eta}{\eta_0}}.
\end{equation}
As mentioned in section \ref{sec:process_types}, neutrino oscillations are neglected in these results.  Presumably, incorporating this effect would lead to a closer match between the fugacities and temperature ratios of the different neutrino flavors.


\bibliography{refs}
\bibliographystyle{utphys_jr}

\end{document}